\documentclass[nohypdvips,reqno,final]{siamart0516}
\usepackage{amsmath}
\usepackage{amssymb}
\usepackage{amsfonts}
\usepackage{graphicx}
\usepackage{epstopdf}
\usepackage{ulem}
\usepackage{hyperref}

\theoremstyle{plain}
\newtheorem{thm}{Theorem}[section]
\newtheorem{lem}[thm]{Lemma}

\newtheorem{cor}[thm]{Corollary}

\theoremstyle{definition}

\theoremstyle{remark}

\usepackage{tensor}
\usepackage{MnSymbol}
\usepackage{tikz-cd}

\newcommand{\ie}{i.\,e.}%
\newcommand{\eg}{e.\,g.}%
\newcommand{\st}{s.\,t.}%
\newcommand{\wrt}{w.r.t.}%

\newcommand{\formComma}{\,\text{,}}
\newcommand{\formPeriod}{\,\text{.}}

\newcommand{\R}{\mathbb{R}}%

\newcommand{\landau}{\mathcal{O}}

\newcommand{\surf}{\mathcal{S}}
\newcommand{\surfh}{\surf_{h}}
\newcommand{\normal}{\boldsymbol{\nu}}
\newcommand{\tangent}{\textup{T}}

\newcommand{\gb}{\boldsymbol{g}}%
\newcommand{\Gb}{\boldsymbol{G}}%
\newcommand{\tb}{\boldsymbol{t}}%
\newcommand{\xb}{\boldsymbol{x}}%
\newcommand{\Xb}{\boldsymbol{X}}%
\newcommand{\qb}{\boldsymbol{q}}%
\newcommand{\Qb}{\boldsymbol{Q}}%
\newcommand{\Psib}{\boldsymbol{\Psi}}%
\newcommand{\psib}{\boldsymbol{\psi}}%
\newcommand{\Wb}{\boldsymbol{W}}%
\newcommand{\Eb}{\boldsymbol{E}}%
\newcommand{\Pb}{\boldsymbol{P}}%

\newcommand{\dV}{\text{d}V}%
\newcommand{\dS}{\text{d}\surf}%
\newcommand{\dxi}{\text{d}\xi}%

\newcommand*{\rom}[1]{\textup{\uppercase\expandafter{\romannumeral#1}}}

\newcommand{\shapeOperator}{\boldsymbol{B}}%
\newcommand{\thirdFundamentalForm}{\shapeOperator^{2}}%
\newcommand{\gaussianCurvature}{\mathcal{K}}%
\newcommand{\meanCurvature}{\mathcal{H}}%
\newcommand{\kronecker}{\boldsymbol{\delta}}%

\DeclareRobustCommand{\GGamma}{\text{\raisebox{\depth}{\scalebox{1}[-1]{$\mathbb{L}$}}}}

\newcommand{\Div}{\operatorname{div}}
\newcommand{\Rot}{\operatorname{rot}}

\newcommand{\ch}[2]{\Gamma_{#1}^{#2}}

\newcommand{\trace}{\operatorname{tr}}%

\newcommand{\Proj}[1]{\operatorname{\Pi}_{#1}}
\newcommand{\ProjBound}{\Proj{}}
\newcommand{\ProjSurf}{\Proj{}}

\newcommand{\ProjQ}{\operatorname{P}_{\!\mathcal{Q}}}

\newcommand{\frakk}{\mathfrak{k}}
\newcommand{\frakl}{\mathfrak{l}}

\newcommand{\atsurf}[1]{\left. #1 \right|_{\surf}}

\newcommand{\betashift}[2]{\sigma_{\!#1}\!\left( #2 \right)}


\title{Nematic liquid crystals on curved surfaces --- a thin film limit}

\author{Ingo Nitschke\footnotemark[1]\ \footnotemark[2]
\and Michael Nestler\footnotemark[2]
\and Simon Praetorius\footnotemark[2]
\and Hartmut L\"owen\footnotemark[3]
\and Axel Voigt\footnotemark[2]\ \footnotemark[4]}

\date{\today}

\begin{document}




\maketitle

\renewcommand{\thefootnote}{\fnsymbol{footnote}}
\footnotetext[1]{Corresponding author: ingo.nitschke@tu-dresden.de (Ingo Nitschke)}
\footnotetext[2]{Institut f{\"u}r Wissenschaftliches Rechnen, Technische Universit{\"a}t Dresden, 01062 Dresden, Germany}
\footnotetext[3]{Institut f\"ur Theoretische Physik II - Soft Matter, Heinrich-Heine-Universit\"at D\"usseldorf, 40225 D\"usseldorf, Germany}
\footnotetext[4]{Dresden Center for Computational Materials Science (DCMS), 01062 Dresden, Germany}
\renewcommand{\thefootnote}{\arabic{footnote}}

\begin{abstract}
We consider a thin film limit of a Landau-de Gennes Q-tensor model. In the limiting process we observe a continuous transition where the normal and tangential parts of the Q-tensor decouple and various intrinsic and extrinsic contributions emerge. Main properties of the thin film model, like uniaxiality and parameter phase space, are preserved in the limiting process. For the derived surface Landau-de Gennes model, we consider an $L^2$-gradient flow. The resulting tensor-valued surface partial differential equation is numerically solved to demonstrate realizations of the tight coupling of elastic and bulk free energy with geometric properties.
\end{abstract}

\begin{keywords}nematic liquid crystals, thin film limit, surface equation\end{keywords}
%

\section{Introduction}
We are concerned with nematic liquid crystals whose molecular orientation is subjected to a tangential anchoring on a curved surface. Such surface nematics offer a non trivial interplay between the geometry and the topology of the surface and the tangential anchoring constraint which can lead to the formation of topological defects. An understanding of this interplay and the resulting type and position of the defects is highly desirable.

As an application, nematic shells have been proposed as switchable capsules optimal for a steered drug delivery \cite{Jenekhe1998}. The defect structure thereby essentially determines where the shells can be opened in a minimal destructive way. Moreover, nematic shells are possible candidates to form supramolecular building blocks for tetrahedral crystals with important implications for photonics \cite{Zerrouki2006}.

Besides such equilibrium structures, defects also play a fundamental role in active systems. In \cite{Keber2014} the spatiotemporal patterns that emerge when an active nematic film of microtubules and molecular motors is encapsulated within a lipid vesicle is analyzed. The combination of activity, topological constraints, and geometric properties produces a myriad of dynamical states. Understanding these relations offers a way to design biomimetic materials, with topological constraints used to control the non-equilibrium dynamics of active matter.

Defects in nematic shells are intensively studied on a sphere \cite{Dzubiella2000,Bates2010,Shin2008,LopezLeon2011,Leon2011,Dhakal2012,Koning2013} and under more complicated constraints, see, \eg, \cite{Prinsen2003,Selinger2011,Nguyen2013,Martinez2014,Segatti2014,Alaimo2017}. However, most of these studies use particle methods. Despite the interest in such methods a continuum description would be more essential for predicting and understanding the macroscopic relation between position and type of the defects and geometric properties of the surface. For bulk nematic liquid crystals the Landau-de Gennes Q-tensor theory \cite{Virga1994,Stewart2004} is a well established field theoretical description. For a mathematical review we refer to \cite{Ball2017}. However, its surface formulation is still under debate. Surface models have been postulated by analogue derivations on the surface \cite{Kralj2011}, by considering the limit of vanishing thickness for bulk Q-tensors models \cite{Napoli2012b,Golovaty2017} or via a discrete-to-continuum limit \cite{Canevari2017}. The derived models differ in details and strongly depend on the made assumptions in the derivation.

Our approach aims to derive a surface Q-tensor model by dimensional reduction via a thin film limit of a general bulk Landau-de Gennes model. In contrast to previous work we only make assumptions on the boundary of the thin film where we admit only states conforming to critical points of the free energy.  In the limiting process we observe a continuous transformation where the normal and tangential parts of the Q-tensor decouple and various intrinsic and extrinsic contributions emerge. The obtained surface Landau-de Gennes energy is compared with previous models \cite{Kralj2011,Napoli2012b,Golovaty2017,Canevari2017} and an $L^2$-gradient flow is considered. The resulting tensor-valued surface partial differential equation is solved numerically on an ellipsoid.

The paper is structured as follows. In Section 2 we present the main results, including the surface Landau-de Gennes energy, a formulation for the evolution
problem, and numerical results to illustrate the mentioned interplay between the geometry, the topology of the surface, and the positions and type of the defects. Section 3 establishes the notation
essential for the derivation of the thin film limit, which is derived in Section 4 for the energy and the $L^2$-gradient flow. A discussion of mathematical and physical implications of the derived model and a comparison with previously postulated thin film models is provided in Section 5. Conclusions are drawn in Section 6 and details of the analysis are given in the Appendix.

\section{Main results}
We consider Q-tensor fields on Riemannian manifolds \( \mathcal{M} \) defined by \( \mathcal{Q}(\mathcal{M}):= \{ \tb\in\tangent^{(2)}(\mathcal{M})\,:\, \trace{\tb} = 0,\,\tb = \tb^{T}\}. \)
We assume \( \mathcal{M} \) as well as \( n \)-tensor bundles \( \tangent^{(n)}(\mathcal{M}) \) to be sufficiently smooth and consider two types of manifolds \( \mathcal{M} \), a surface $\surf$ without boundaries and a thin film $\surf_h:=\surf\times[-h/2,h/2]$ of thickness $h$. We have \( \mathcal{Q}(\surf) \subset \atsurf{\mathcal{Q}(\surfh)} \)
and we can tie a surface Q-tensor \( \qb\in\mathcal{Q}(\surf) \) with a restricted bulk Q-tensor \( \Qb\in\atsurf{\mathcal{Q}(\surfh)} \) by the orthogonal projections \( \ProjSurf= \operatorname{Id} - \normal\otimes\normal \), with identity $\operatorname{Id}$ and surface normal $\normal$ and \( \ProjQ \) a Q-tensor projection defined in \eqref{eq:qtensorprojection}, \ie,
\begin{equation}\label{eq:surface-Q-tensor}
  \qb = \ProjQ\left( \ProjSurf\Qb|_\surf\ProjSurf \right) = \ProjSurf\Qb|_\surf\ProjSurf + \frac{1}{2}\left( \normal\Qb|_\surf \normal\right)\ProjSurf\formPeriod
\end{equation}
For Q-tensors $\Qb\in\mathcal{Q}(\surf_h)$ we consider the elastic and bulk free energy $\mathcal{F}^{\surf_h} = \mathcal{F}^{\surf_h}_\text{el} + \mathcal{F}^{\surf_h}_\text{bulk}$ with
\begin{equation}\begin{split}\label{eq:ThinFilmEnergy}
  \mathcal{F}^{\surf_h}_\text{el}[\Qb] &:= \frac{1}{2}\int_{\surf_h} L_{1} \|\nabla\Qb\|^2
                                                                   + L_{2} \left\| \Div\Qb \right\|^{2}
                                                                   + L_{3} \left\langle \nabla\Qb , \left( \nabla\Qb \right)^{T_{(2\,3)}} \right\rangle
                                                                   + L_{6} \left\langle \left( \nabla\Qb \right)\Qb , \nabla\Qb \right\rangle\,\dV\,, \\
  \mathcal{F}^{\surf_h}_\text{bulk}[\Qb] &:= \int_{\surf_h} a\trace \Qb^2 + \frac{2}{3}b\trace \Qb^3 + c\trace \Qb^4\,\dV\,,
\end{split}\end{equation}
see, \eg, \cite{Mottram2014}, with elastic parameters \( L_{i} \)
and thermotropic parameters $a,b$, and $c$. For simplicity we restrict our analysis to achiral liquid crystals, i.e. $L_4 = 0$, see the general form in \cite{Mottram2014}.

Moreover, let \( \ProjBound\Qb\normal = \normal\Qb\ProjBound = 0 \) and \( \normal\Qb\normal = \beta \) be essential anchoring conditions at \( \partial\surf_{h} \),
where \( \beta \) is considered to be constant. Consequently, we obtain the natural anchoring conditions
\begin{align}\label{eq:NatBC}
  \ProjBound\left( \left( L_{1} + \beta L_{6} \right) \left(\nabla\Qb\right)\normal + L_{3}  \left(\nabla\Qb\right)^{T_{(2\,3)}}\normal \right)\ProjBound &=0 \quad\mbox{ at } \partial\surf_{h}
\end{align}
which ensure vanishing boundary integrals in the first variation \( \delta\mathcal{F}^{\surf_h} \). For $\qb$ as in \eqref{eq:surface-Q-tensor} we obtain in the thin film limit $\frac{1}{h}\mathcal{F}^{\surf_h}[\Qb] = \mathcal{F}^{\surf}[\qb] + \mathcal{O}(h^2)$ the corresponding surface free energy $\mathcal{F}^{\surf} = \mathcal{F}^{\surf}_\text{el} + \mathcal{F}^{\surf}_\text{bulk}$ with
\begin{equation}\begin{split}\label{eq:SurfaceEnergy}
  \mathcal{F}^{\surf}_\text{el}[\qb] &:= \frac{1}{2}\int_\surf  L'_{1} \|\nabla\qb\|^2
                                                              + L_{6} \left\langle \left( \nabla\qb \right)\qb , \nabla\qb \right\rangle \\
                                     &\quad\quad + M_{1}\trace\qb^{2}
                                                 + M_{2}\left\langle \shapeOperator,\qb \right\rangle^{2}
                                                 + M_{3}\trace\qb^{2} \left\langle \shapeOperator,\qb \right\rangle
                                                 + M_{4}\left\langle\shapeOperator,\qb \right\rangle
                                                + C_{0} \,\dS, \\
  \mathcal{F}^{\surf}_\text{bulk}[\qb] &:= \int_\surf a'\trace\qb^2 + c\trace\qb^4 + C_{1}\,\dS,
\end{split}\end{equation}
and shape operator $\shapeOperator=-(\ProjSurf\nabla)\normal$.
In contrast to \eqref{eq:ThinFilmEnergy}, all operators are defined by the Levi-Civita connection and inner products are considered at the surface.
All parameter functions $L'_{1},M_{1},M_{2},M_{3},M_{4},C_{0},C_{1},$ and $a'$ can be related to the thin film parameters $L_{i}$, the surface quantities
$\meanCurvature$ (mean curvature) and $\gaussianCurvature$ (Gaussian curvature), and $\beta$, see \eqref{eq:parameterfunctions}. The \( L^{2} \)-gradient flow $\partial_t \qb = - \nabla_{L^2} \mathcal{F}^{\surf} $ reads
\begin{align}\label{eq:evoliEquation}
  \partial_{t}\qb &=
       L'_{1} \Delta^{dG}\qb
       + L_{6}\left(\left( \nabla\nabla\qb \right):\qb + \left( \nabla\qb \right)\cdot\Div\qb
                    - \frac{1}{2}\left( \nabla\qb \right)^{T_{(1\,3)}}:\nabla\qb + \frac{1}{4}\left\| \nabla\qb \right\|^{2}\gb \right)\notag\\
     &\quad  -\left( M_{1} + M_{3}\left\langle \shapeOperator,\qb \right\rangle  + 2a' +2c \trace\qb^{2} \right)\qb
      -\left( M_{2}\left\langle \shapeOperator,\qb \right\rangle + \frac{M_{3}}{2}\trace\qb^{2} + \frac{M_{4}}{2} \right)\left(\shapeOperator - \frac{1}{2}\meanCurvature\gb\right)
\end{align}
on $\surf \times [0,T]$ with the div-Grad (Bochner) Laplacian $ \Delta^{dG}$. The same evolution equation also follows as the thin film limit of the corresponding $L^2$-gradient flow
$\partial_t \Qb = - \nabla_{L^2} \mathcal{F}^{\surf_h}$ for \eqref{eq:ThinFilmEnergy}.

\begin{figure}[ht]
\begin{center}
\includegraphics[width=.23\linewidth]{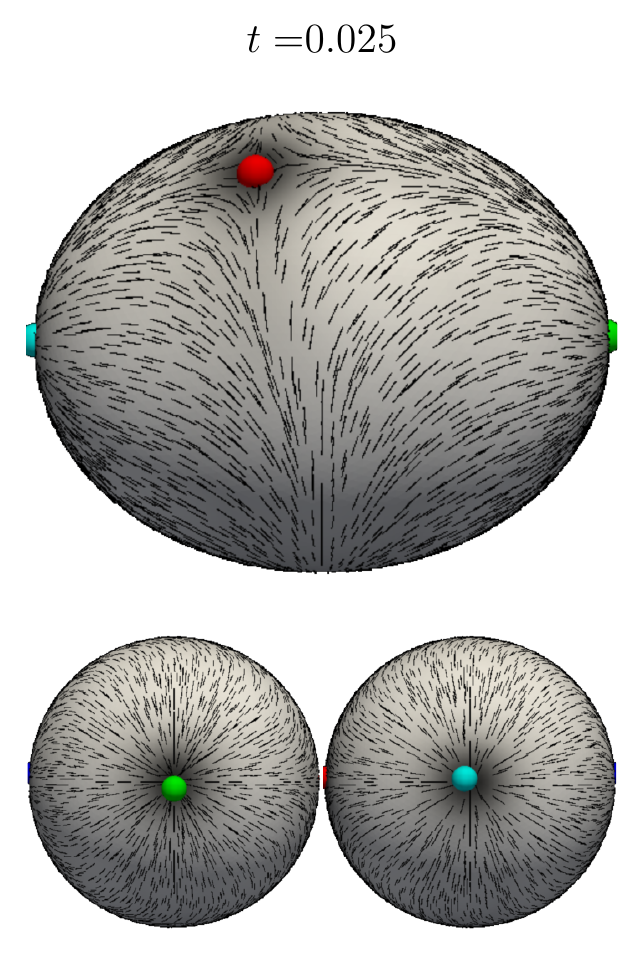}
\includegraphics[width=.23\linewidth]{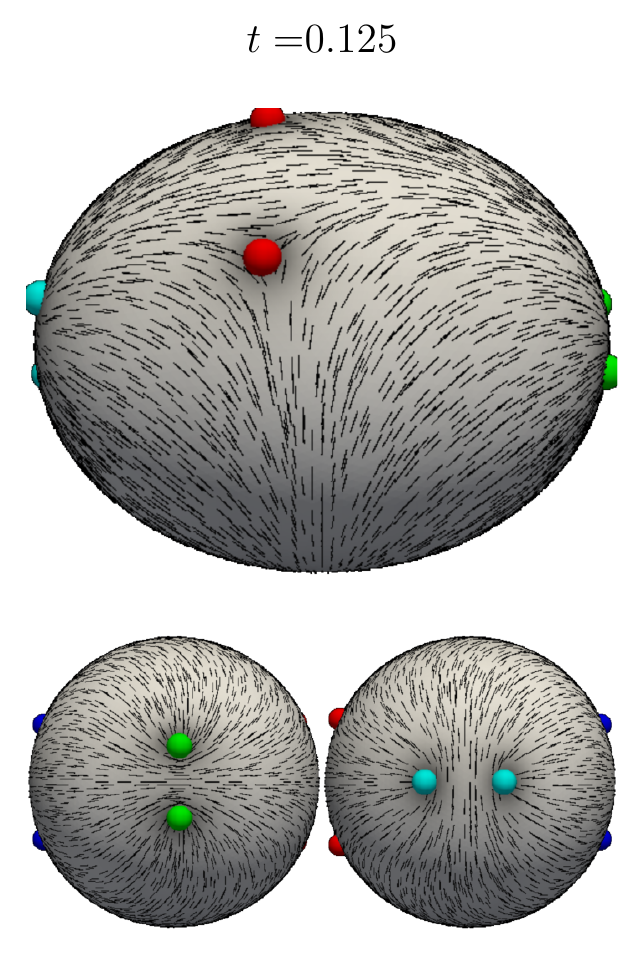}
\includegraphics[width=.23\linewidth]{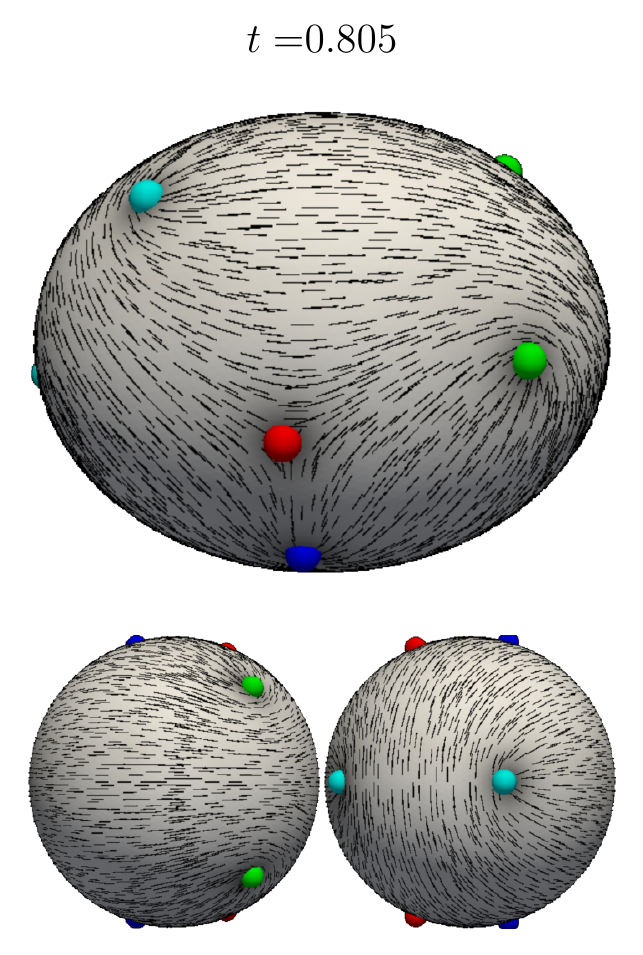}
\includegraphics[width=.23\linewidth]{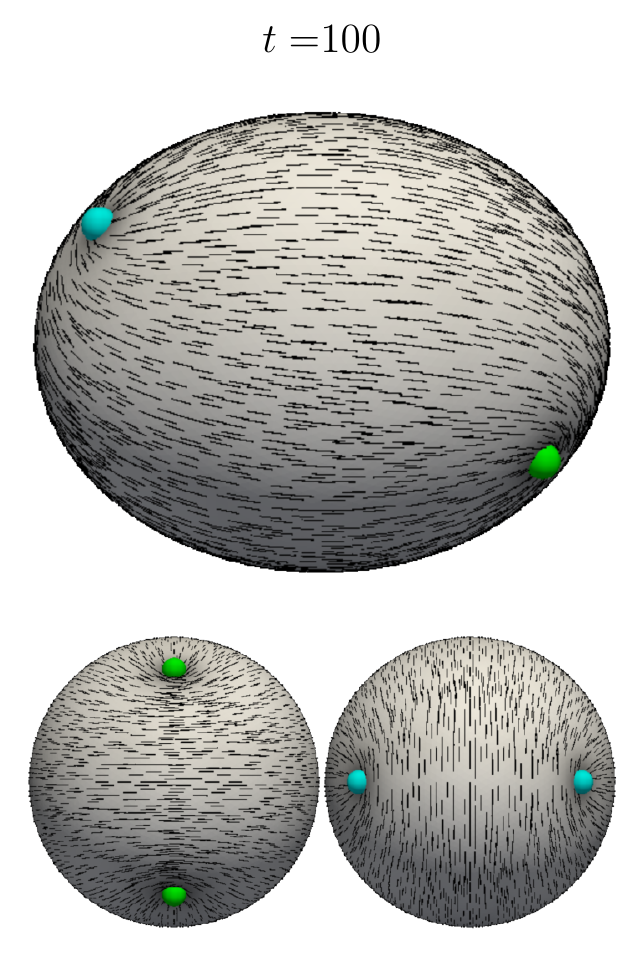}\\
\includegraphics[width=.96\linewidth]{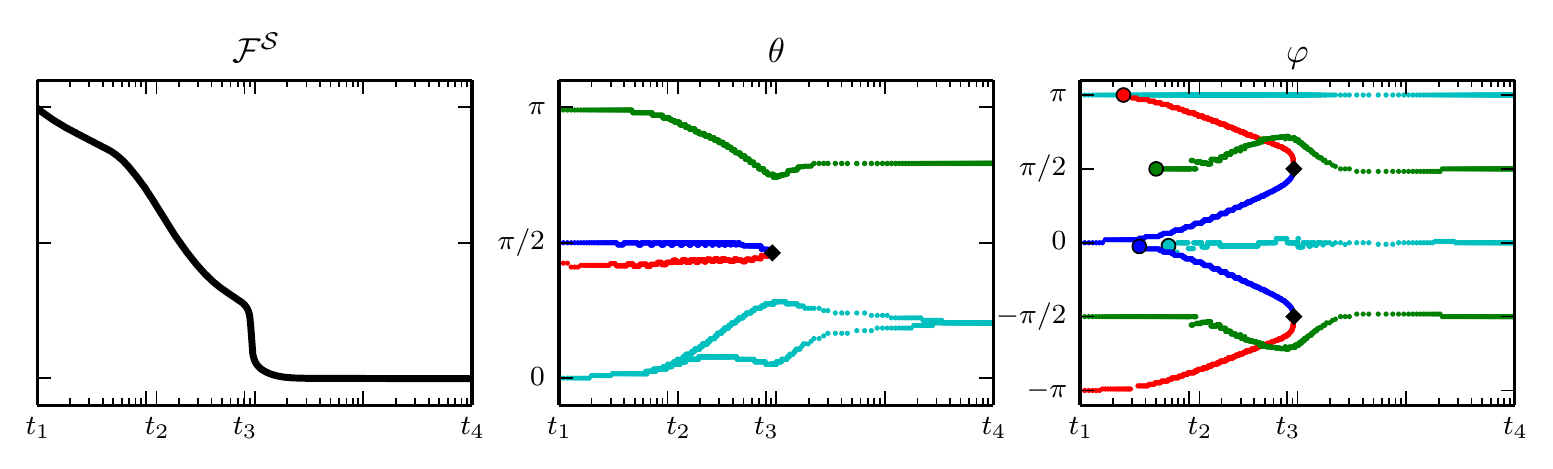}
\end{center}
\caption{\textbf{Numerical simulation on an ellipsoid:} (colors online) (top) snapshots [side, top, bottom view] of defects and principal director. From left to right: initial state of four defects with topological charge $+1$ (nodes: blue, green, cyan) and $-1$ (saddles: red); break down into pairs of $+ 1/2$ (wedges) and $- 1/2$ (trisectors) defects, respectively; attraction and repulsion of defects until two pairs of oppositely charged defects annihilate; minimum energy state with four $+1/2$ defects in deformed tetrahedral configuration as described in \cite{Kralj2011}. (bottom) From left to right: surface free-energy $\mathcal{F}^{\surf}$ over time; defect positions in spherical coordinates with polar angle $\theta$ and azimuthal angle $\varphi$ over time. The colors correspond to the marked defects in the top row. Colored dots mark emerging defects, black diamonds indicate defect annihilation. The half-axis of the ellipsoid are $[1,1,1.25]$ and parameters are $L_1 = L_2 = - L_3 = 1$, $L_6 = 0$, $M_2 = M_3 = 0$ and $a = - 2/3$, $b = -1/2$, $c = 1$. We further consider $\mathcal{F}^{\surf} = \mathcal{F}^{\surf}_\text{el} + \omega \mathcal{F}^{\surf}_\text{bulk}$ with $\omega = 100$.}
\label{fig:defectDynamics}
\end{figure}

To numerically solve the tensor-valued surface partial differential equation \eqref{eq:evoliEquation} we use a similar approach as considered in
\cite{Nestler2017,Reuther2017}. We reformulate the equation in $\R^3$ euclidean coordinates and penalize all normal contributions $\qb \cdot \normal$, to enforce tangentiality of the tensor. This leads to a coupled nonlinear system
of scalar-valued surface partial differential equations for the components of $\qb$, which can be solved using surface finite elements \cite{Dziuk2013}. \autoref{fig:defectDynamics} shows the evolution on a spheroidal ellipsoid. The initial configuration is set as in \cite{Nestler2017}(Fig. 1), with three nodes and a saddle defect, which are placed along an equatorial plane. In accordance with the Poincar\'e-Hopf theorem the topological charges of these defects add up to the Euler characteristic of the surface, $1+1+1-1 = 2$. After some rearrangement all four defects split into pairs of $+ \frac{1}{2}$ and $- \frac{1}{2}$ defects, which move away from each other perpendicular to the initial equatorial plane. Equally charged defects repel each other and oppositely charged defects attract each other. This leads to an annihilation of two pairs of $+\frac{1}{2}$ and $-\frac{1}{2}$ defects. According to the geometric properties of the ellipsoid the remaining four $+\frac{1}{2}$ defects arrange pairwise in the vicinity of the high curvature regions, with each pair perpendicular to each other. This deformed tetrahedral configuration is known to be the minimal energy state, see \cite{Lubensky1992,Nelson2002} for a sphere and \cite{Kralj2011} for ellipsoids. We further observe the principle director to be aligned with the minimal curvature lines in the final configuration. This alignment is a consequence of the extrinsic contributions in \eqref{eq:SurfaceEnergy}, where our model differs from previous studies. Another remarkable feature of the derived surface Landau-de Gennes model is the possibility of coexisting isotropic and nematic phases. Such coexistence is know in three-dimensional models and results from the presence of the $\trace\Qb^3$ term in \eqref{eq:ThinFilmEnergy}. Such a term is absent in two-dimensional models in flat space. This difference in the three- and two-dimensional model typically changes the phase transition type. In our model the dependency of $M_1$ on curvature, see \eqref{eq:parameterfunctions}, allows to locally modify the double-well potential in \eqref{eq:SurfaceEnergy} and thus allows for coexisting states due to changing geometric properties of the surface.

\section{Notational convention and thin film calculus}
For notational compactness of tensor algebra we use the Ricci calculus, where lowercase indices $i,j,k,\ldots$
denote components in a surface coordinate system and uppercase indices $I,J,K,\ldots$ denote
components in the extended three dimensional thin film coordinate system.
Brackets $[]$ and $\{\}$ are used to switch between components and object representation, \ie, for a
2-tensor $\tb$ we write $[\tb]_{ij}=t_{ij}$ for the components and $\{t_{ij}\}=\tb$
for the object.
Most of the tensor formulations in this paper are invariant \wrt\ coordinate transformations,
thus a co- and contravariant distinction in the object representation is not necessary.
However, if such a distinction is needed, we use
the notation of musical isomorphisms $\sharp$ and $\flat$ for raising and lowering indices,
respectively.
These are extended to tensors in a natural way, \eg, for a 2-tensor $\tb=\{\tensor{t}{^{i}_{j}}\}\in\tensor{\tangent}{^{1}_{1}}\surf$ we write
${}^{\flat}\tb^\sharp = \{\tensor{t}{_{i}^{j}}\} = \gb\{\tensor{t}{^{i}_{j}}\}\gb^{-1}\in\tensor{\tangent}{_{1}^{1}}\surf$ with metric tensor \( \gb \) in \( \surf \). Finally, a tensor product
denotes a contraction $[\boldsymbol{s}\tb]_{ij}:=\tensor{s}{_{i}^{k}}\tensor{t}{_{k}_{j}}$ and the Frobenius
norm of a rank-$n$ tensor $\tb$ will be denoted by $\|\tb\|_{\gb}$, \ie, $\|\tb\|_{\gb}^2=\langle\tb,\,\tb\rangle_{\gb}$ with
$\langle\boldsymbol{s},\,\tb\rangle_{\gb} := \tensor{s}{_{i_1\cdots i_n}}\tensor{t}{^{i_1\cdots i_n}}$, that has to be understood
\wrt\ the corresponding metric $\gb$ for raising and lowering the indices\footnote{The suffix $\gb$ will be omitted, if it is
clear which metric the scalar product refers to.}.

The first, second, and third fundamental form are denoted by $g_{ij} = \langle\partial_i\xb,\,\partial_j\xb\rangle$ (metric tensor),
$[\shapeOperator]_{ij} = -\langle\partial_i\xb,\,\partial_j\normal\rangle$ (covariant shape operator), and
$[\thirdFundamentalForm]_{ij} = \langle\partial_i\normal,\,\partial_j\normal\rangle$, respectively. With
this, curvature quantities can be derived: $\gaussianCurvature = \operatorname{det}\shapeOperator^\sharp$ (Gaussian curvature) and
$\meanCurvature = \operatorname{tr}\shapeOperator=\tensor{\shapeOperator}{^{i}_{i}}$ (mean curvature). The Kronecker
delta will be denoted by $\kronecker = \{\delta^i_j\}$ and the Christoffel symbols (of second kind)
will be denoted by $\ch{ij}{k}=\frac{1}{2}g^{kl}(\partial_i g_{jl} + \partial_j g_{il} - \partial_l g_{ij})$ at the surface and
$\GGamma_{IJ}^{K}=\frac{1}{2}G^{KL}(\partial_I G_{JL} + \partial_J G_{IL} - \partial_L g_{IJ})$ in the thin film,
where \( \Gb \) is the metric tensor of the thin film \( \surfh \),
\eg, \( G_{IJ} = \delta_{IJ} \) and \( \GGamma_{IJ}^{K} = 0 \) in the euclidean case.

The surface \( \surf \) and the thin film \( \surfh \) as Riemannian manifolds are equipped with different metric compatible Levi-Civita connections \( \nabla \).
We use ``$;$'' in the thin film and ``$|$'' at the surface to point out the difference for covariant derivatives in index notation, \eg,
\begin{align}
  \left[ \nabla\Qb \right]_{IJK} &= Q_{IJ;K} = \partial_{K}Q_{IJ} - \GGamma_{KI}^{L}Q_{LJ} - \GGamma_{KJ}^{L}Q_{IK} && \text{in }\surfh\formComma\label{eq:CoDerivativeThin}\\
  \left[ \nabla\qb \right]_{ijk} &= q_{ij|k} = \partial_{k}q_{ij} - \Gamma_{ki}^{l}q_{lj} - \Gamma_{kj}^{l}q_{ik} && \text{in }\surf\formPeriod\label{eq:CoDerivativeSurf}
\end{align}

We define the coordinate in normal direction \( \normal \) of the surface \( \surf \) by \( \xi\in\left[ -\frac{h}{2}, \frac{h}{2} \right] \).
The local surface coordinates are \( (u,v) \) defined on every chart in the atlas of \( \surf \), \st\ the immersion
\( \xb: (u,v) \mapsto \R^{3} \) parametrize the surface.
Adding these up, we obtain a parametrization \( \Xb: (u,v,\xi) \mapsto \R^{3} \) of the thin film \( \surfh \), defined by
\( \Xb(u,v,\xi) := \xb(u,v) + \xi\normal(u,v) \formPeriod \)
This means, the lowercase indices $i,j,k,\ldots$ are in \( \{u,v\} \) and the uppercase indices $I,J,K,\ldots$ are in \( \{u,v,\xi\} \).
The canonical choice of basis vectors in the tangential bundles are \( \partial_{i}\xb\in\tangent\surf \) and \( \partial_{I}\Xb\in\tangent\surfh \).
Therefore, the metric tensors are defined by \( g_{ij} = \partial_{i}\xb \cdot \partial_{j}\xb \) and \( G_{ij} = \partial_{I}\Xb \cdot \partial_{J}\Xb \).
Consequently, it holds \( G_{i\xi} = G_{\xi i} = 0  \), \( G_{\xi\xi} = 1 \), and by \eqref{eq:InverseMetric}, we get for the inverse metric tensor
\( G^{i\xi} = G^{\xi i} = 0 \), \( G^{\xi\xi}=1 \). The pure tangential components of the thin film metric and its inverse can be expressed as a
second order surface tensor polynomial in \( \xi\shapeOperator \) and a second order expansion
\begin{align}\label{eq:MetricExpand}
  G_{ij} = g_{ij} - 2\xi B_{ij} + \xi^{2}\left[ \thirdFundamentalForm \right]_{ij} \quad \text{and} \quad G^{ij} &= g^{ij} + 2\xi B^{ij} + \landau(\xi^{2}) \formComma
\end{align}
respectively. Consequently, there is no need for rescaling while lowering or rising the normal coordinate index \( \xi \), \ie\,
for an arbitrary thin film tensor \( \Wb \) it holds
\begin{align}\label{eq:RiseLowNormalIndex}
  \tensor*{W}{*^{\ldots}_{\ldots}^{\xi}_{ }^{\ldots}_{\ldots}}
        &= G^{\xi I} \tensor*{W}{*^{\ldots}_{\ldots}^{}_{I}^{\ldots}_{\ldots}}
         = \tensor*{W}{*^{\ldots}_{\ldots}^{}_{\xi}^{\ldots}_{\ldots}}\formPeriod
\end{align}
Moreover, a contraction of two arbitrary thin film tensor \( \Wb \) and \( \widetilde{\Wb} \) restricted to the surface results in a contraction of the tangential part \wrt\ the surface metric
and a product of the normal part, \ie,
\begin{align}\label{eq:ContractionThinToSurf}
  \begin{aligned}
  \atsurf{\tensor*{W}{*^{\ldots}_{\ldots}^{}_{I}^{\ldots}_{\ldots}}\tensor*{\widetilde{W}}{*^{\ldots}_{\ldots}^{I}_{}^{\ldots}_{\ldots}}}
      &= \atsurf{G^{IJ}\tensor*{W}{*^{\ldots}_{\ldots}^{}_{I}^{\ldots}_{\ldots}}\tensor*{\widetilde{W}}{*^{\ldots}_{\ldots}^{}_{J}^{\ldots}_{\ldots}}}
      = g^{ij}\atsurf{\tensor*{W}{*^{\ldots}_{\ldots}^{}_{i}^{\ldots}_{\ldots}}\tensor*{\widetilde{W}}{*^{\ldots}_{\ldots}^{}_{j}^{\ldots}_{\ldots}}}
       + \atsurf{\tensor*{W}{*^{\ldots}_{\ldots}^{}_{\xi}^{\ldots}_{\ldots}}\tensor*{\widetilde{W}}{*^{\ldots}_{\ldots}^{}_{\xi}^{\ldots}_{\ldots}}} \\
      &= \atsurf{\tensor*{W}{*^{\ldots}_{\ldots}^{}_{i}^{\ldots}_{\ldots}}\tensor*{\widetilde{W}}{*^{\ldots}_{\ldots}^{i}_{}^{\ldots}_{\ldots}}}
       + \atsurf{\tensor*{W}{*^{\ldots}_{\ldots}^{}_{\xi}^{\ldots}_{\ldots}}\tensor*{\widetilde{W}}{*^{\ldots}_{\ldots}^{}_{\xi}^{\ldots}_{\ldots}}}\formPeriod
  \end{aligned}
\end{align}

To deal with covariant derivatives, we have to take the Christoffel symbols into account.
It is sufficient to expand \( \GGamma_{IJ}^{K} \) first order in normal direction, since we only use first order derivatives and no partial derivatives of the symbols are
necessary.
Hence, \eqref{eq:MetricExpand} result in
\begin{align}\label{eq:ChristoffelExpansion}
    \GGamma_{ij}^{k} = \Gamma_{ij}^{k} + \landau(\xi)\formComma \quad
    \GGamma_{ij}^{\xi} = B_{ij} + \landau(\xi) \formComma \quad
    \GGamma_{\xi\xi}^{K} = \GGamma_{I\xi}^{\xi} = \GGamma_{\xi I}^{\xi} = 0\formComma \quad \text{and} \quad
     \GGamma_{i\xi}^{k} = \GGamma_{\xi i}^{k} &= -B_{ij} + \landau(\xi) \formPeriod
\end{align}
The volume element \( \dV \) can be split up into a surface and a normal part by \eqref{eq:DetMetricExpand}, \ie,
\begin{align}\label{eq:VolumeElementExpansion}
  \dV &= \sqrt{\det\Gb}\,\text{d}u\text{d}v\dxi
        = (1 - \xi\meanCurvature + \xi^{2}\gaussianCurvature)\sqrt{\det\gb}\,\text{d}u\text{d}v\dxi
       = (1 - \xi\meanCurvature + \xi^{2}\gaussianCurvature) \dS\dxi \formPeriod
\end{align}

\section{Thin film limit}
Thin film limits require a reduction of degrees of freedom.  We deal with this issue by setting Dirichlet boundary conditions for the normal parts of \( \Qb \) and postulate
a priori a minimum of the free energy on the inner and outer boundary of the thin film. This is achieved by considering natural boundary condition of the weak Euler-Lagrange equation. In this setting we restrict the density of \(  \mathcal{F}^{\surf_h} \) to the surface and integrate in normal direction to obtain the surface energy \( \mathcal{F}^{\surf} \). In the same way, we also show the consistency of the thin film and surface \( L^{2} \)-gradient flows. The next subsection considers the reformulation of the surface Landau-de Gennes energy to obtain the formulation in \eqref{eq:SurfaceEnergy}, which allows a distinction of extrinsic and intrinsic contributions. Finally, we present a strong formulation of the derived equation of motion.

\subsubsection{Derivation of thin film limits}\label{sec:derivationOfThinFilmLimit}
The free energy \eqref{eq:ThinFilmEnergy} in the thin film \( \surfh \) in index notation reads
\begin{align}\label{eq:ThinFilmEnergyIndices}
    \mathcal{F}^{\surf_h}_\text{el}[\Qb] &=
        \frac{1}{2}\int_{\surfh}
            L_{1}Q_{IJ;K}Q^{IJ;K}
          + L_{2}\tensor{Q}{_{I}^{J}_{;J}}\tensor{Q}{^{IK}_{;K}}
          + L_{3}Q_{IJ;K}Q^{IK;J}
          + L_{6}Q^{KL}Q_{IJ;K}\tensor{Q}{^{IJ}_{;L}}
        \dV\notag\\
      \mathcal{F}^{\surf_h}_\text{bulk}[\Qb] &=
        \int_{\surf_h}
          a Q_{IJ}Q^{JI} + \frac{2}{3}b Q_{IJ}Q^{JK}\tensor{Q}{_{K}^{I}} + c Q_{IJ}Q^{JK}Q_{KL}Q^{LI}
        \dV \formPeriod
\end{align}
With respect to arbitrary thin film Q-tensors \( \Psib \in\mathcal{Q}(\surf_h) \),
the corresponding first variations
\begin{align} %
  \delta\mathcal{F}^{\surf_h}_\text{el}\left( \Qb, \Psib \right)
      &=\int_{\surf_h} \Psi_{IJ;K}\left( L_{1}Q^{IJ;K} + L_{3}Q^{IK;J} + L_{6}Q^{KL}\tensor{Q}{^{IJ}_{;L}} \right) \nonumber\\
      &\quad\quad + L_{2}\tensor{\Psi}{_{I}^{J}_{;J}}\tensor{Q}{^{IK}_{;K}}
                     + \frac{L_{6}}{2}\Psi_{IJ}\tensor{Q}{_{KL}^{;I}}Q^{KL;J}\dV\formComma \label{eq:ElFirstVar}\\
  \delta\mathcal{F}^{\surf_h}_\text{bulk}\left( \Qb, \Psib \right)
      &= 2\int_{\surf_h} \left( \left( a + cQ_{KL}Q^{KL} \right)Q^{IJ} + b Q^{IK}\tensor{Q}{_{K}^{J}} \right) \Psi_{IJ} \dV\formPeriod
\end{align}
which are used to find local minimizers of the functional \(\mathcal{F}^{\surf_h} = \mathcal{F}^{\surf_h}_\text{el} +  \mathcal{F}^{\surf_h}_\text{bulk} \),
using the \( L^{2} \)-gradient flow
\begin{align}\label{eq:ELEqThin}
  \int_{\surfh}\left\langle \partial_{t}\Qb, \Psib \right\rangle \dV
   &= -\delta\mathcal{F}^{\surf_h}\left( \Qb, \Psib \right)
        = -\int_{\surfh}\left\langle \nabla_{L^{2}}\mathcal{F}^{\surf_h}, \Psib \right\rangle\dV
\end{align}
for all \( \Psib \in\mathcal{Q}(\surf_h) \).
However, integration by parts of \eqref{eq:ElFirstVar} gives
\begin{align}\label{eq:EulerWithBound}
  \delta\mathcal{F}^{\surf_h}\left( \Qb, \Psib \right)
      &= \int_{\surfh}\left\langle \nabla_{L{2}}\mathcal{F}^{\surf_h}  , \Psib \right\rangle\dV\\
      &\quad  + \int_{\partial\surfh}
          L_{2} \tensor{Q}{_{I}^{J}_{;J}} \tensor{\Psi}{^{I}_{\xi}}
         +\left( L_{1} Q_{IJ;\xi} + L_{3}Q_{I\xi;J} + L_{6}Q^{\xi K}Q_{IJ;K} \right)\Psi^{IJ}
        \text{d}A \notag\formComma
\end{align}
where \( \text{d}A \) is the volume form of the boundary surfaces.
For the choice of essential boundary conditions, we require that \( \Qb \) has to have two directors in the boundary tangential bundle and the remaining director has to be the boundary
normal, \ie, for \( \Pb\in\tangent\partial\surfh \) a pure covariant representation of \( \Qb \) at the boundary is
\begin{align}
  \Qb = S_{1} \Pb^{\flat}\otimes\Pb^{\flat} + S_{2} \normal^{\flat}\otimes\normal^{\flat} - \frac{1}{3}\left( S_{1}+S_{2} \right)\Gb
\end{align}
with scalar order parameter \( S_{1} \) and \( S_{2} \).
Hence, it holds \( Q_{i\xi} = Q_{\xi i} = 0 \) and \( Q_{\xi\xi}= \frac{1}{3} (2S_{2} - S_{1}) \).
For simplicity, we set the pure normal part of \( \Qb \) constant, \ie, \( Q_{\xi\xi}= \beta\in\R \) at \( \partial\surfh \).
Therefore, \( \Psib \) has to be in \(\mathcal{Q}_{0}(\surf_h) := \left\{ \Psib\in \mathcal{Q}(\surf_h):\, \Psi_{I\xi} = \Psi_{\xi I} = 0 \text{ at }\partial\surfh \right\} \),
and we consider the natural boundary conditions
\( 0 = \left( L_{1} + L_{6}\beta \right)Q_{ij;\xi} + L_{3}Q_{i\xi;j}\) at \(\partial\surfh \formComma\)
so that the boundary integral in \eqref{eq:EulerWithBound} vanishs.
Here, our analysis differs from previous results, which deal with a global determination of the normal derivatives in the whole bulk of \( \surfh \) by
parallel transport \( \nabla_{\xi}\Qb = 0 \), or by \( \partial_{\xi}\Qb = 0 \), see \cite{Napoli2012b,Golovaty2017}.

With \autoref{lem:BCAtSurfaceExpansion} we can relate the anchoring conditions to surface identities
\begin{align}\label{eq:AnchoringExpansion}
    \atsurf{Q_{\xi\xi}} &= \beta + \landau(h^{2})
        & \atsurf{\partial_{\xi}Q_{\xi\xi}} &= \landau(h^{2}) & \!\!\!\!\!\!\!\!\!\!\!\!\!\!\left( L_{1} + L_{6}\beta \right)\atsurf{Q_{ij;\xi}} + L_{3}\atsurf{Q_{i\xi;j}} &= \landau(h^{2})\notag \\
    \atsurf{Q_{i\xi}} &= \atsurf{Q_{\xi i}} = \landau(h^{2})
        &\atsurf{\partial_{\xi}Q_{i \xi}} &= \atsurf{\partial_{\xi}Q_{\xi i}} = \landau(h^{2})\formPeriod
\end{align}
Evaluating \( \Psib\in \mathcal{Q}_{0}(\surf_h)\) at the surface results in
\( \atsurf{\Psi_{I\xi}} = \atsurf{\Psi_{\xi I}} =  \atsurf{\partial_{\xi}\Psi_{I \xi}} = \atsurf{\partial_{\xi}\Psi_{\xi I}} =  \landau(h^{2})\formPeriod
\)
The restricted Q-tensor \( \left\{ \atsurf{Q_{ij}} \right\}\in\tangent^{(2)}\surf \)
is not a Q-tensor, because \( \trace_{\gb}\left\{ \atsurf{Q_{ij}} \right\} = \atsurf{\trace_{\Gb}\Qb} - \atsurf{Q_{\xi\xi}} = - \atsurf{Q_{\xi\xi}}\).
We thus project \( \left\{ \atsurf{Q_{ij}} \right\} \) to \( \mathcal{Q}(\surf) \) with the orthogonal projection
\begin{align}\label{eq:qtensorprojection}
  \ProjQ:\tangent^{(2)}\surf &\rightarrow \mathcal{Q}(\surf)\formComma \quad
  \tb \mapsto \frac{1}{2}\left( \tb + \tb^{T} - (\trace_{\gb}\tb)\gb \right) \formComma
\end{align}
and define \( \qb\in\mathcal{Q}(\surf) \) by
\begin{align}\label{eq:SurfQExpansion}
  \qb &:= \ProjQ\left\{\atsurf{Q_{ij}}\right\}
           = \left\{\atsurf{Q_{ij}}\right\} + \frac{\beta}{2}\gb + \landau(h^{2})\formPeriod
\end{align}
For \( \Psib\in \mathcal{Q}_{0}(\surf_h)\) the tangential part is already a Q-tensor up to \( \landau(h^{2}) \).
Therefore we define $\psi_{ij}  := \atsurf{\Psi_{ij}} + \frac{1}{2}\atsurf{\psi_{\xi\xi}}g_{ij} = \atsurf{\Psi_{ij}} + \landau(h^{2})\formComma$
where \( \psib\in\mathcal{Q}(\surf) \).
With \eqref{eq:CoDerivativeThin}, \eqref{eq:CoDerivativeSurf}, \eqref{eq:ChristoffelExpansion}, \eqref{eq:AnchoringExpansion},
\eqref{eq:SurfQExpansion}, and the tensor shift $\betashift{\omega}{\qb} := \qb - \frac{\omega}{2}\beta\gb \formComma$
we can determine all covariant derivatives restricted to the surface by
\begin{align}
  \begin{aligned}
  \atsurf{Q_{\xi\xi;\xi}} &= \atsurf{\partial_{\xi}Q_{\xi\xi}} = \landau(h^{2}) \\
  \atsurf{Q_{i\xi ;\xi}} &= \atsurf{Q_{\xi i;\xi}}
              = \atsurf{\partial_{\xi}Q_{i \xi}} - \atsurf{\GGamma_{\xi i}^{K}Q_{K\xi}} = \landau(h^{2})\\
  \atsurf{Q_{\xi\xi;k}}
              &= \atsurf{\partial_{k}Q_{\xi\xi}} - 2\atsurf{\GGamma_{k \xi}^{L}Q_{L\xi}} = \landau(h^{2})\\
   \atsurf{Q_{i\xi ;k}} &= \atsurf{Q_{\xi i;k}}
              = \atsurf{\partial_{k}Q_{i \xi}} - \atsurf{\GGamma_{k i}^{l}Q_{l\xi}} - \atsurf{\GGamma_{k i}^{\xi}Q_{\xi\xi}} - \atsurf{\GGamma_{k \xi}^{l}Q_{il}}\\
              &= -\beta B_{ik} + \left( q_{il} - \frac{\beta}{2}g_{il} \right) \tensor{B}{^{l}_{k}} + \landau(h^{2})
               = \left[ \betashift{3}{\qb}\shapeOperator \right]_{ik} + \landau(h^{2}) \\
   \atsurf{Q_{ij;\xi}}
              &= - \frac{ L_{3}}{L_{1} + L_{6}\beta} \atsurf{Q_{i\xi;j}} + \landau(h^{2})
               = - \frac{ L_{3}}{L_{1} + L_{6}\beta}\left[ \betashift{3}{\qb}\shapeOperator \right]_{ik} + \landau(h^{2})\\
   \atsurf{Q_{ij;k}}
              &= \atsurf{\partial_{k}Q_{i j}} - \atsurf{\GGamma_{ki}^{l}Q_{lj}} - \atsurf{\GGamma_{ki}^{\xi}Q_{\xi j}}
                                              - \atsurf{\GGamma_{kj}^{l}Q_{il}} - \atsurf{\GGamma_{kj}^{\xi}Q_{i \xi}} \\
              &=\atsurf{\partial_{k}Q_{i j}} - \atsurf{\Gamma_{ki}^{l}Q_{lj}} - \atsurf{\Gamma_{kj}^{l}Q_{il}} + \landau(h^{2})
                = \left[ \betashift{1}{\qb} \right]_{ij|k} + \landau(h^{2})\\
              &= q_{ij|k} + \landau(h^{2}) \formPeriod
   \end{aligned}
\end{align}
Analogously, for the components of the covariant derivative \( \atsurf{\nabla\Psib} \), we obtain
$\atsurf{\Psi_{I\xi;\xi}} = \atsurf{\Psi_{\xi I;\xi}} = \atsurf{\Psi_{\xi\xi;I}} = \landau(h^{2})$,
$\atsurf{\Psi_{i\xi;k}} = \atsurf{\Psi_{\xi i ;k}} = \left[ \psib\shapeOperator \right]_{ik} + \landau(h^{2})$ and
$\atsurf{\Psi_{ij;k}} = \psi_{ij|k} + \landau(h^{2}) \formPeriod$
Note, in absence of natural boundary conditions for \( \Psib \), the covariant normal derivatives \( \atsurf{\Psi_{ij;\xi}} \) of the tangential components stay undetermined.
However, as we will see, the thin film limit of the \( L^{2} \)-gradient flow \eqref{eq:ELEqThin} does not depend on these derivatives. Adding up the three terms in \eqref{eq:ThinFilmEnergyIndices} with factors \( L_{1} \), \( L_{3} \), and \( L_{6} \),
factoring \( \nabla\Qb \) out, restricting to the surface
and considering \eqref{eq:RiseLowNormalIndex} and \eqref{eq:ContractionThinToSurf}, results in
\begin{align}
\begin{aligned}
 L_{1} \atsurf{\left\|\nabla\Qb\right\|^2_{\Gb}} &+ L_{3}\atsurf{\left\langle \nabla\Qb , \left( \nabla\Qb \right)^{T_{(2\,3)}} \right\rangle_{\Gb}}
                                          + L_{6} \atsurf{\left\langle \left( \nabla\Qb \right)\Qb , \nabla\Qb \right\rangle_{\Gb}} \\
      &= \atsurf{Q_{IJ;K}\left( L_{1}Q^{IJ;K} + L_{3}Q^{IK;J} + L_{6}Q^{KL}\tensor{Q}{^{IJ}_{;L}}\right)}\\
      &= \atsurf{Q_{ij;k}\left( L_{1}Q^{ij;k} + L_{3}Q^{ik;j} + L_{6}Q^{kl}\tensor{Q}{^{ij}_{;l}}\right)}\\
      &\quad  + \atsurf{Q_{\xi j;k}\left( L_{1}Q^{\xi j;k} + L_{3}Q^{\xi k;j} + L_{6}Q^{kl}\tensor{Q}{^{\xi j}_{;l}}\right)}\\
      &\quad  + \atsurf{Q_{i\xi ;k}\left( L_{1}Q^{i\xi ;k} + L_{3}Q^{ik;\xi } + L_{6}Q^{kl}\tensor{Q}{^{i\xi }_{;l}}\right)}\\
      &\quad  + \atsurf{Q_{ij;\xi }\left( L_{1}Q^{ij;\xi } + L_{3}Q^{i\xi ;j} + L_{6}Q^{\xi \xi }\tensor{Q}{^{ij}_{;\xi }}\right)}+ \landau(h^{2})\\
      &= \left(L_{1}-\frac{\beta}{2}L_{6}\right) \|\nabla\qb\|^2_{\gb} + L_{3}\left\langle \nabla\qb , \left( \nabla\qb \right)^{T_{(2\,3)}} \right\rangle_{\gb}
                                          + L_{6} \left\langle \left( \nabla\qb \right)\qb , \nabla\qb \right\rangle_{\gb} \\
      &\quad + \left( 2L_{1} - \frac{L_{3}^{2}}{L_{1}+L_{6}\beta} \right)\left\| \betashift{3}{\qb}\shapeOperator \right\|^{2}_{\gb}
             + L_{3}\trace_{\gb}\left( \betashift{3}{\qb}\shapeOperator \right)^{2}\\
      &\quad + 2L_{6} \left\langle \betashift{3}{\qb}\shapeOperator\betashift{1}{\qb}, \betashift{3}{\qb}\shapeOperator \right\rangle_{\gb}+ \landau(h^{2}) \formPeriod
\end{aligned}
\end{align}
With \( \trace\qb^{3} = 0 \) we obtain for the remaining terms
\begin{align}
  \atsurf{\left\| \Div\Qb \right\|^{2}_{\Gb}}
      &= \atsurf{\tensor{Q}{_{I}^{J}_{;J}}\tensor{Q}{^{IK}_{;K}}}
       = \atsurf{\tensor{Q}{_{i}^{j}_{;j}}\tensor{Q}{^{ik}_{;k}}} + \atsurf{\tensor{Q}{_{\xi}^{j}_{;j}}\tensor{Q}{_{\xi}^{k}_{;k}}} +\landau(h^{2}) \notag\\
      &= \left\| \Div\qb \right\|_{\gb}^{2} + \left( \trace_{\gb} \left(\betashift{3}{\qb}\shapeOperator\right) \right)^{2} +\landau(h^{2}) \\
  \atsurf{\trace_{\Gb}\Qb^{2}} &= \atsurf{Q_{IJ}Q^{JI}}
                             = \atsurf{Q_{ij}Q^{ji}} + \atsurf{(Q_{\xi\xi})^{2}} +\landau(h^{2})\notag\\
                             &= \trace_{\gb}\left( \qb - \frac{\beta}{2}\gb \right)^{2} + \beta^{2} +\landau(h^{2})
                             = \trace_{\gb}\qb^{2} + \frac{3}{2}\beta^{2} +\landau(h^{2})\\
  \atsurf{\trace_{\Gb}\Qb^{3}} &= \atsurf{Q_{IJ}Q^{JK}\tensor{Q}{_{K}^{I}}}
                                = \atsurf{Q_{ij}Q^{jk}\tensor{Q}{_{k}^{i}}} + \atsurf{(Q_{\xi\xi})^{3}} +\landau(h^{2})\notag\\
                               &= \trace_{\gb}\left( \qb - \frac{\beta}{2}\gb \right)^{3} + \beta^{3} +\landau(h^{2})
                                = \frac{3}{2}\beta\left( \frac{\beta^{2}}{2} - \trace_{\gb}\qb^{2} \right) +\landau(h^{2}) \label{eq:energysurfBulkQ3}\\
  \atsurf{\trace_{\Gb}\Qb^{4}} &= \frac{1}{2} \atsurf{\left(\trace_{\Gb}\Qb^{2}\right)^{2}}
                                = \trace_{\gb}\qb^{4} + \frac{3}{2}\beta^{2}\trace_{\gb}\qb^{2} + \frac{9}{8}\beta^{4} +\landau(h^{2})\formPeriod
\end{align}
Adding all these up, we can define \( \mathcal{F}^{\surf} := \mathcal{F}^{\surf}_\text{el} + \mathcal{F}^{\surf}_\text{bulk} \) by
\begin{align}\label{eq:energysurf1}
  \begin{aligned}
    \mathcal{F}^{\surf}_\text{el}[\qb] &:= \frac{1}{2}\int_\surf
                               \left(L_{1}-\frac{\beta}{2}L_{6}\right) \|\nabla\qb\|^2
                             + L_{2}\left\| \Div\qb \right\|^{2}
                             + L_{3}\left\langle \nabla\qb , \left( \nabla\qb \right)^{T_{(2\,3)}} \right\rangle
                             + L_{6} \left\langle \left( \nabla\qb \right)\qb , \nabla\qb \right\rangle\\
                 &\quad\quad + \left( 2L_{1} - \frac{L_{3}^{2}}{L_{1}+L_{6}\beta} \right)\left\| \betashift{3}{\qb}\shapeOperator \right\|^{2}
                             + L_{2}\left( \trace \left(\betashift{3}{\qb}\shapeOperator\right) \right)^{2}
                             + L_{3}\trace\left( \betashift{3}{\qb}\shapeOperator \right)^{2}\\
                 &\quad\quad + 2L_{6} \left\langle \betashift{3}{\qb}\shapeOperator\betashift{1}{\qb}, \betashift{3}{\qb}\shapeOperator \right\rangle \dS\\
  \mathcal{F}^{\surf}_\text{bulk}[\qb] &:= \int_\surf \frac{1}{2}(2a - 2b\beta + 3c\beta^2)\trace\qb^{2} + c \trace\qb^{4} + \frac{\beta^2}{8}(12a + 4b\beta + 9c\beta^2)\,\dS
  \end{aligned}
\end{align}
and by the rectangle rule and \eqref{eq:VolumeElementExpansion}, we obtain for \( h\rightarrow 0 \)
\begin{align}\label{eq:rectangleEnergie}
  \begin{aligned}
  \frac{1}{h}\mathcal{F}^{\surfh} &= \frac{1}{h}\int_{\surfh}F^{\surfh}\dV
              = \frac{1}{h}\int_{-\frac{h}{2}}^{\frac{h}{2}} \int_{\surf}  \left( 1- \xi\meanCurvature + \xi^{2}\gaussianCurvature \right)F^{\surfh}\dS\dxi
              = \int_{\surf} F^{\surf} \dS + \landau(h^{2})\\
              &= \mathcal{F}^{\surf} + \landau(h^{2})
                \longrightarrow \mathcal{F}^{\surf} \formPeriod
  \end{aligned}
\end{align}

Consequently, the energies \( \mathcal{F}^{\surf} \) and \( \mathcal{F}^{\surfh} \) are consistent \wrt\ the thickness \( h \).
To show a similar asymptotic behavior for the \( L^{2} \)-gradient flows, we investigate the first variation
\( \delta \mathcal{F}^{\surfh} = \delta\mathcal{F}^{\surfh}_{\text{el}} + \delta\mathcal{F}^{\surfh}_{\text{bulk}}\)
in \eqref{eq:ElFirstVar} and compare with the first variation
\( \delta \mathcal{F}^{\surf}= \delta\mathcal{F}^{\surf}_{\text{el}} + \delta\mathcal{F}^{\surf}_{\text{bulk}}\), where
\begin{align}
  \delta \mathcal{F}^{\surf}_{\text{el}}(\qb,\psib)
      &=  \int_\surf \left(L_{1}-\frac{\beta}{2}L_{6}\right) \left\langle \nabla\qb, \nabla\psib \right\rangle
                    + L_{2}\left\langle \Div\qb,\Div\psib \right\rangle
                    + L_{3} \left\langle \nabla\qb , \left( \nabla\psib \right)^{T_{(2\,3)}} \right\rangle \notag\\
      &\quad\quad   + L_{6}\left( \left\langle \left( \nabla\qb \right)\qb , \nabla\psib \right\rangle
                                + \frac{1}{2}\left\langle \left( \nabla\qb \right)\psib , \nabla\qb \right\rangle \right)
                    + \left( 2L_{1} - \frac{L_{3}^{2}}{L_{1}+L_{6}\beta} \right)\left\langle  \betashift{3}{\qb}\shapeOperator, \psib\shapeOperator \right\rangle\notag\\
      &\quad\quad   + L_{2}\left\langle \betashift{3}{\qb}, \shapeOperator \right\rangle \left\langle \shapeOperator, \psib \right\rangle
                    + L_{3}\left\langle \shapeOperator\betashift{3}{\qb}, \psib\shapeOperator \right\rangle\notag\\
      &\quad\quad   + L_{6}\left( 2\left\langle \betashift{3}{\qb}\shapeOperator\betashift{1}{\qb}, \psib\shapeOperator \right\rangle
                                 +\left\langle\shapeOperator \left( \betashift{3}{\qb} \right)^{2},  \psib\shapeOperator \right\rangle\right) \,\dS\\
  \delta \mathcal{F}^{\surf}_{\text{bulk}}(\qb,\psib)
      &=  \int_\surf (2a - 2b\beta + 3c\beta^2)\left\langle \qb, \psib \right\rangle
                    + 2c\trace\qb^{2}\left\langle \qb, \psib \right\rangle \,\dS \formPeriod
\end{align}
Proceeding as before we restrict the terms under the integral of \( \delta \mathcal{F}^{\surfh} \) in \eqref{eq:ElFirstVar} to the surface. For
\( \delta \mathcal{F}^{\surfh}_{\text{el}} \) we obtain
\begin{align}
  \Psi_{IJ;K}\Big( L_{1}Q^{IJ;K} &+ \atsurf{L_{3}Q^{IK;J} + L_{6}Q^{KL}\tensor{Q}{^{IJ}_{;L}} \Big)}\notag\\
                  &= \atsurf{\Psi_{ij;k}\left( L_{1}Q^{ij;k} + L_{3}Q^{ik;j} + L_{6}Q^{kl}\tensor{Q}{^{ij}_{;l}} \right)}\notag\\
                  &\quad  +\atsurf{\Psi_{i\xi;k}\left( 2L_{1}Q^{i\xi;k} + L_{3}\left( Q^{ik;\xi} + Q^{k\xi;i} \right) + 2L_{6}Q^{kl}\tensor{Q}{^{i\xi}_{;l}} \right)}
                          +\landau(h^{2})\notag\\
                  &= \psi_{ij|k}\left( \left( L_{1} - \frac{\beta}{2}L_{6} \right)q^{ij|k} + L_{3}q^{ik|j} + L_{6}q^{kl}\tensor{q}{^{ij}_{|l}} \right)\notag\\
                  &\quad  + \left[ \psib\shapeOperator \right]_{ik} \left(
                                                 \left( 2L_{1} - \frac{L_{3}^{2}}{L_{1}+L_{6}\beta} \right) \left[  \betashift{3}{\qb}\shapeOperator \right]^{ik}
                                                + L_{3}\left[  \betashift{3}{\qb}\shapeOperator \right]^{ki} \right)\notag\\
                  &\quad  +2L_{6} \left[ \psib\shapeOperator \right]_{ik} \left[  \betashift{3}{\qb}\shapeOperator\betashift{1}{\qb} \right]^{ik}  +\landau(h^{2})\\
  \atsurf{\tensor{\Psi}{_{I}^{J}_{;J}}\tensor{Q}{^{IK}_{;K}}}
                  &= \tensor{\psi}{_{i}^{j}_{|j}}\tensor{q}{^{ik}_{|k}}
                    +\tensor{\left[ \psib\shapeOperator \right]}{^{j}_{j}} \tensor{\left[ \betashift{3}{\qb}\shapeOperator \right]}{^{k}_{k}} +\landau(h^{2}) \\
  \frac{1}{2}\atsurf{\Psi_{IJ}\tensor{Q}{_{KL}^{;I}}Q^{KL;J}}
                  &= \frac{1}{2}\atsurf{\Psi_{ij}\left( \tensor{Q}{_{kl}^{;i}}Q^{kl;j} + 2\tensor{Q}{_{k\xi}^{;i}}Q^{k\xi;j} \right)} +\landau(h^{2})\notag\\
                  &= \frac{1}{2}\psi_{ij}\tensor{q}{_{kl}^{|i}}q^{kl|j} + \psi_{ij}\left[ \shapeOperator(\betashift{3}{\qb})^{2}\shapeOperator  \right]^{ij} +\landau(h^{2})
\end{align}
and for \( \delta \mathcal{F}^{\surfh}_{\text{bulk}} \)
\begin{align}
  2\atsurf{\left(a+cQ_{KL}Q^{KL}\right)Q^{IJ}\Psi_{IJ}}
      &= \left( 2a + 2cq_{kl}q^{kl} + 3c\beta^{2} \right)\left( q^{ij}\psi_{ij} - \frac{\beta}{2}\tensor{\psi}{^{i}_{i}} \right)  +\landau(h^{2}) \notag\\
      &= \left( 2a + 3c\beta^{2} \right) \left\langle \qb, \psib \right\rangle + 2c\trace\qb^{2}\left\langle \qb, \psib \right\rangle +\landau(h^{2})\\
  2b\atsurf{Q^{IK}\tensor{Q}{_{K}^{J}}\Psi_{IJ}}
      &= 2b\atsurf{Q^{ik}\tensor{Q}{_{k}^{j}}\Psi_{ij}} +\landau(h^{2})
      = 2b\left( \left[ \qb^{2} \right]^{ij} - \beta q^{ij} + \frac{\beta^{2}}{4}g^{ij} \right)\psi_{ij} +\landau(h^{2}) \notag\\
      &= -2b\beta q^{ij}\psi_{ij} + b\left( \trace\qb^{2} + \frac{\beta^{2}}{2} \right)\tensor{\psi}{^{i}_{i}}+\landau(h^{2})\notag\\
      &= -2b \left\langle \qb, \psib \right\rangle +\landau(h^{2}) \formComma
\end{align}
where we used \autoref{lem:QuadraticIdenentities}, \ie, \( 2\qb^{2}=(\trace\qb^{2})\gb \), particularly.
In summary, we see
that
\( \atsurf{\left\langle \nabla_{L^{2}}\mathcal{F}^{\surf_h}, \Psib \right\rangle} =  \left\langle \nabla_{L^{2}}\mathcal{F}^{\surf}, \psib \right\rangle + \landau(h^{2}) \)
is valid.
Moreover, as \( \partial_{t}\gb = 0 \) for a stationary surface, we obtain \( \left[ \atsurf{\partial_{t}\Qb} \right]_{ij} = \left[ \partial_{t}\qb \right]_{ij} + \landau(h^{2}) \).
Finally, as in \eqref{eq:rectangleEnergie}, we argue with the rectangle rule in normal direction and observe
\begin{align}\label{eq:FirstVarThinVSSurf}
  \frac{1}{h}\int_{\surfh}\left\langle  \nabla_{L{2}}\mathcal{F}^{\surf_h} + \partial_{t}\Qb, \Psib \right\rangle\dV
      &=  \int_{\surf}\left\langle  \nabla_{L^{2}}\mathcal{F}^{\surf} + \partial_{t}\qb, \psib \right\rangle\dS + \landau(h^{2})\formPeriod
\end{align}

\subsubsection{Surface energy}\label{sec:surfaceEnergy}
To have a better distinction between extrinsic terms, \ie, \( \left\langle \shapeOperator, \qb \right\rangle \),
and terms depending only on scalar curvatures \( \meanCurvature \) and \( \gaussianCurvature \) in the surface energy \eqref{eq:energysurf1},
we use \autoref{lem:QuadraticIdenentities} and obtain the substitutions
\begin{align}
  \left( \trace\left( \betashift{3}{\qb}\shapeOperator \right) \right)^{2}
      &= \left\langle  \shapeOperator, \qb  \right\rangle^{2} - 3\beta\meanCurvature \left\langle  \shapeOperator, \qb  \right\rangle +\frac{9}{4}\beta^{2}\meanCurvature^{2}\\
  \trace\left( \betashift{3}{\qb}\shapeOperator \right)^{2}
      &= \left\langle  \shapeOperator, \qb  \right\rangle^{2} + \gaussianCurvature\trace\qb^{2}
          -  3\beta\meanCurvature \left\langle  \shapeOperator, \qb  \right\rangle
          + \frac{9}{4}\beta^{2}\left( \meanCurvature^{2} - 2\gaussianCurvature \right)\\
  \left\| \betashift{3}{\qb}\shapeOperator \right\|^{2}
      &= \frac{1}{2}\left(  \meanCurvature^{2} - 2\gaussianCurvature \right)\trace\qb^{2}
         -  3\beta\meanCurvature \left\langle  \shapeOperator, \qb  \right\rangle
          + \frac{9}{4}\beta^{2}\left( \meanCurvature^{2} - 2\gaussianCurvature \right)\\
   \left\langle \betashift{3}{\qb}\shapeOperator \betashift{1}{\qb} , \betashift{3}{\qb}\shapeOperator\right\rangle
       &= \frac{1}{2}\meanCurvature\trace\qb^{2}\left\langle  \shapeOperator, \qb  \right\rangle
        -\beta\left( 3 \left\langle  \shapeOperator, \qb  \right\rangle^{2} + \frac{1}{4}\left( \meanCurvature^{2} + 10\gaussianCurvature \right)\right)\trace\qb^{2}\notag\\
      &\quad  + \frac{15}{4}\meanCurvature\beta^{2}\left\langle  \shapeOperator, \qb  \right\rangle
        - \frac{9}{8}\beta^{3}\left( \meanCurvature^{2} - 2\gaussianCurvature \right)
\end{align}
at the surface \( \surf \).
Terms with invariant measurement of the gradient \( \nabla\qb \) differ only in zero order quantities for a closed surface, see \autoref{lem:derivationContractionsSurf}.
Adding all these up, we obtain \eqref{eq:SurfaceEnergy} and therefore in index notation
\begin{align}
  \begin{aligned}
    \mathcal{F}^{\surf}_\text{el}[\qb]
         &=\frac{1}{2}\int_\surf L'_{1} q_{ij|k}q^{ij|k} + L_{6} q^{kl}q_{ij|k}\tensor{q}{^{ij}_{|l}} \\
                         &\quad\quad + M_{1} q_{ij}q^{ij} + M_{2} B^{ij}B^{kl}q_{ij}q_{kl} + M_{3} B^{ij}q_{ij}q^{kl}q_{kl} + M_{4}B^{ij}q_{ij}
                                + C_{0}\,\dS \formComma\\
    \mathcal{F}^{\surf}_\text{bulk}[\qb] &:= \int_\surf a' q_{ij}q^{ij} + c q_{ij}q^{jk}q_{kl}q^{li} + C_{1}\,\dS \formComma\\
  \end{aligned}
\end{align}
with coefficient functions
\begin{align}\label{eq:parameterfunctions}
    \begin{aligned}
    L'_{1} &:= L_{1} + \frac{1}{2}\left( L_{2} + L_{3} - L_{6}\beta \right)\formComma\\
    M_{2}  &:= L_{2} + L_{3} - 6L_{6}\beta\formComma\\
    M_{3}  &:= L_{6}\meanCurvature\formComma\\
    M_{1}  &:= \frac{1}{2}\left( -L_{6}\left( \meanCurvature^{2}+10\gaussianCurvature \right)\beta
                                                                  +\left( 2L_{1} - \frac{L_{3}^{2}}{L_{1}+L_{6}\beta}\right)\left( \meanCurvature^{2}-2\gaussianCurvature \right)
                                                                  +\left( L_{2} + L_{3} \right)\gaussianCurvature \right)\formComma\\
    M_{4}  &:= -3\left( 2L_{1} + L_{2} + L_{3} - \frac{5}{2}L_{6}\beta - \frac{L_{3}^{2}}{L_{1}+L_{6}\beta} \right)\beta\meanCurvature\formComma\\
    C_{0}  &:= \frac{9}{4}\left( \left( 2L_{1} + L_{3} - L_{6}\beta - \frac{L_{3}^{2}}{L_{1}+L_{6}\beta} \right)\left( \meanCurvature^{2}-2\gaussianCurvature \right)
                                                                      +L_{2}\meanCurvature^{2} \right)\beta^{2}\formComma \\
    a'     &:= \frac{1}{2}(2a - 2b\beta + 3c\beta^2) \quad \quad \text{and} \\
    C_{1}  &:= \frac{\beta^2}{8}(12a + 4b\beta + 9c\beta^2) \formPeriod
    \end{aligned}
\end{align}

\subsubsection{Surface equation of motion}\label{sec:surfaceEquationOfMotion}
To obtain the strong form of the surface \( L^{2} \)-gradient flow $\partial_{t}\qb  = -\nabla_{L^{2}}\mathcal{F}^{\surf}$ we have to ensure
\begin{align}
 \int_{\surf}\left\langle \partial_{t}\qb, \psib \right\rangle \dS
                                            &= -\int_{\surf}\left\langle \nabla_{L^{2}}\mathcal{F}^{\surf}, \psib \right\rangle \dS, \quad  \forall\psib\in\mathcal{Q}(\surf),
\end{align}
\wrt\ the \( L^{2} \) inner product over the space of Q-tensors and thus \( \nabla_{L^{2}}\mathcal{F}^{\surf} \in \mathcal{Q}(\surf) \).
While for the first variations $\delta$ \wrt\ $\qb$ in direction $\psi$
\begin{align}
  \frac{1}{2}\delta\int_{\surf}\left\| \nabla\qb \right\|^{2}\,\dS
            &= \int_{\surf} \left\langle -\Div\nabla\qb , \psib  \right\rangle\,\dS\formComma\\
 \frac{1}{2}\delta\int_{\surf} \trace\qb^{2} \,\dS
            &= \int_{\surf} \left\langle \qb , \psib  \right\rangle\,\dS\formComma\\
 \frac{1}{2}\delta\int_{\surf} \trace\qb^{4} \,\dS
            &= \int_{\surf} \left\langle \left( \trace\qb^{2} \right)\qb , \psib  \right\rangle\,\dS\formComma
\end{align}
the left argument of the inner product is already in \( \mathcal{Q}(\surf) \),
we have to apply \( \ProjQ \) defined in \eqref{eq:qtensorprojection} for the remaining terms, \ie,
\begin{align}
   \frac{1}{2}\delta\int_{\surf} \left\langle \left( \nabla\qb \right)\qb , \nabla\qb \right\rangle \,\dS
      &= \int_{\surf}\left( -\left( q_{ij|k}q^{kl} \right)_{|l} + \frac{1}{2}q_{kl|i}\tensor{q}{^{kl}_{|j}} \right)\psi^{ij} \,\dS\notag\\
      &= \int_{\surf}\left( - q_{ij|k|l}q^{kl} - q_{ij|k}\tensor{q}{^{kl}_{|l}}
                            + \frac{1}{2}\left[\ProjQ\left\{q_{kl|i}\tensor{q}{^{kl}_{|j}}\right\}\right]_{ij} \right)\psi^{ij}\,\dS\notag\\
      &= \int_{\surf}\left( - q_{ij|k|l}q^{kl} - q_{ij|k}\tensor{q}{^{kl}_{|l}}
                            +  \frac{1}{2}q_{kl|i}\tensor{q}{^{kl}_{|j}} -  \frac{1}{4}q_{kl|m}q^{kl|m}g_{ij}  \right)\psi^{ij} \,\dS\\
      &= \int_{\surf} \left\langle \left( -\nabla\nabla\qb \right):\qb - \left( \nabla\qb \right)\cdot\Div\qb
                      +\frac{1}{2}\left( \nabla\qb \right)^{T_{(1\,3)}}:\nabla\qb - \frac{1}{4}\left\| \nabla\qb \right\|^{2}\gb, \psib \right\rangle \,\dS\formComma\notag\\
  \frac{1}{2}\delta\int_{\surf} \left\langle \shapeOperator,\qb \right\rangle^{2} \,\dS
      &= \int_{\surf} \left\langle \left\langle \shapeOperator,\qb \right\rangle \shapeOperator , \psib \right\rangle \,\dS
       = \int_{\surf} \left\langle \left\langle \shapeOperator,\qb \right\rangle \ProjQ\shapeOperator , \psib \right\rangle \,\dS \notag\\
      &= \int_{\surf} \left\langle \left\langle \shapeOperator,\qb \right\rangle \left(\shapeOperator - \frac{1}{2}\meanCurvature\gb\right) , \psib \right\rangle \,\dS\formComma\\
  \frac{1}{2}\delta\int_{\surf} \trace\qb^{2}\left\langle \shapeOperator,\qb \right\rangle \,\dS
      &= \int_{\surf} \left\langle \frac{1}{2}\trace\qb^{2}\shapeOperator + \left\langle \shapeOperator,\qb \right\rangle\qb , \psib \right\rangle\,\dS
       = \int_{\surf} \left\langle \frac{1}{2}\trace\qb^{2}\ProjQ\shapeOperator + \left\langle \shapeOperator,\qb \right\rangle\qb , \psib \right\rangle\,\dS \notag\\
      &= \int_{\surf} \left\langle \frac{1}{2}\trace\qb^{2}\left(\shapeOperator - \frac{1}{2}\meanCurvature\gb\right)
                                    + \left\langle \shapeOperator,\qb \right\rangle\qb , \psib \right\rangle\,\dS\formComma \\
  \frac{1}{2}\delta\int_{\surf} \left\langle \shapeOperator,\qb \right\rangle \,\dS
      &= \int_{\surf}\left\langle \frac{1}{2}\shapeOperator, \psib \right\rangle \,\dS
       = \int_{\surf}\left\langle \frac{1}{2}\ProjQ\shapeOperator, \psib \right\rangle \,\dS \notag\\
      &= \int_{\surf}\left\langle \frac{1}{2}\left(\shapeOperator - \frac{1}{2}\meanCurvature\gb\right), \psib \right\rangle \,\dS\formPeriod
\end{align}
Finally, with \( \left[ \Delta^{dG}\qb \right]_{ij} := \tensor{q}{_{ij}^{|k}_{|k}} \), the div-Grad (Bochner) Laplace operator,
we get the equation of motion \eqref{eq:evoliEquation}, which reads in index notation
\begin{align}
 \partial_{t}q_{ij}  &= L'_{1} \tensor{q}{_{ij}^{|k}_{|k}}
     + L_{6}\left(  - q_{ij|k|l}q^{kl} - q_{ij|k}\tensor{q}{^{kl}_{|l}}
                            +  \frac{1}{2}q_{kl|i}\tensor{q}{^{kl}_{|j}} -  \frac{1}{4}q_{kl|m}q^{kl|m}g_{ij}  \right)\\
    &\quad  - \left( M_{1} + M_{3}B_{kl}q^{kl}  + 2a' +2c q_{kl}q^{kl} \right)q_{ij}
      - \left( M_{2} B_{kl}q^{kl} + \frac{M_{3}}{2}q_{kl}q^{kl} + \frac{M_{4}}{2} \right)\left(B_{ij} - \frac{1}{2}\meanCurvature g_{ij}\right)\notag \formPeriod
\end{align}

After establishing weak consistences for the energies and the \( L^{2} \)-gradient flows in the thin film and at the surface in \eqref{eq:rectangleEnergie} and \eqref{eq:FirstVarThinVSSurf}, we also have pointwise consistence for the evolution equation in the Q-tensor space restricted to the surface for sufficient regularity, \ie,
\begin{align}
  \left\| \ProjQ\left[ \ProjSurf\atsurf{\left( \partial_{t}\Qb + \nabla_{L^{2}}\mathcal{F}^{\surfh}\left[ \Qb \right] \right)}\ProjSurf\right]
      -\left(\partial_{t}\qb + \nabla_{L^{2}}\mathcal{F}^{\surf}\left[ \qb \right]\right) \right\|_{\gb} &= \landau(h^{2})
\end{align}
\wrt\ boundary conditions for \( \Qb \) at \( \partial\surfh \) and initial condition
\( \qb|_{t=0} = \ProjSurf\Qb|_{(\surf,t=0)}\ProjSurf + \normal\Qb|_{(\surf,t=0)}\normal \).
This means, the order of performing the limit \( h\rightarrow 0 \) and formulating the local dynamic equation, \wrt\ \( \nabla_{L^{2}} \) flow, does not matter,
\ie, the diagram
\begin{equation}
  \begin{tikzcd}
    \mathcal{F}^{\surfh} \arrow{r}{h\rightarrow 0 }[swap]{\frac{1}{h}} \arrow[mapsto]{d}[swap]{\nabla_{L^{2}}\text{flow}} & \mathcal{F}^{\surf} \arrow[mapsto]{d}{\nabla_{L^{2}}\text{flow}}\\
        \partial_{t}\Qb = -\nabla_{L^{2}}\mathcal{F}^{\surfh}\left[ \Qb \right]  \arrow{r}{h\rightarrow 0 }[swap]{\ProjQ, \ProjSurf}
                & \partial_{t}\qb = -\nabla_{L^{2}}\mathcal{F}^{\surf}\left[ \qb \right]
  \end{tikzcd}
\end{equation}
commutes.

\section{Discussion}
We now discuss similarities and differences between the thin film and surface Landau-de Gennes energy and their physical implication. Besides the terms containing the extrinsic quantity \( \shapeOperator \) and its scalar valued invariants, the surface Q-tensor energy \eqref{eq:SurfaceEnergy} is similar to the thin film Q-tensor energy \eqref{eq:ThinFilmEnergy}. While we have three scalar invariants for the gradient \( \nabla_{\Gb}\Qb \) in the thin film controlled by \( L_{1} \), \( L_{2} \), and \( L_{3} \),
at the surface we need only one for \( \nabla_{\gb}\qb \) to formulate the distortion of \( \qb \), see \autoref{lem:derivationContractionsSurf}.
This behavior seems to be a consequence of reducing the degree of freedoms of Q-tensors.
Particularly, \( \mathcal{Q}(\surf_h) \) is a five dimensional function-vector space, while \( \mathcal{Q}(\surf) \) is only a two dimensional function-vector space with improper rotation endomorphisms in the tangential bundle
as basis tensors.
Moreover, at the surface we only consider the trace of even powers of \( \qb \) for the bulk energy as
for \( r\ge 0 \) it holds
\begin{align}
  \trace\qb^{2(r+1)} &= \left\langle \left( \qb^{2} \right)^{r+1}, \gb \right\rangle
                      = 2^{-(r+1)}\left( \trace\qb^{2} \right)^{r+1}\left\| \gb \right\|^{2} = 2^{-r}\left( \trace\qb^{2} \right)^{r+1}\\
  \trace\qb^{2r+1} &= \left\langle \left( \qb^{2} \right)^{r}, \qb \right\rangle
                      = 2^{-r}\left( \trace\qb^{2} \right)^{r} \trace{\qb} = 0, \label{eq:oddTensorTrace}
\end{align}
see \autoref{lem:QuadraticIdenentities}. This has several consequences. In principle it leads to a change in phase transition type, as coexistence between a nematic and an isotropic phase is not possible without the $\trace\qb^3$ term. However, as we will see, our model still allows coexistence. We first show that we can preserve the phase diagram of the thin film bulk energy. To limit complexity we have considered $\normal \Qb \normal=\beta$ to be constant. Similar assumptions have been made in \cite{Napoli2012b, Kralj2011}. Our approach chooses $\beta$ such that surface and thin film formulation of bulk energy match. For $\beta=-\frac{1}{3} S^*$, where $S^*=\frac{1}{4c}(-b+\sqrt{b^2-24ac})$ indeed the minima of $\mathcal{F}^{\surf_h}_\text{bulk}$ and $\mathcal{F}^{\surf}_\text{bulk}$ are equal and are achieved for $S=S^*$, with $S = S_1 = S_2$ or $S = S_1$ if $S_2 = 0$ or $S = S_2$ if $S_1 = 0$. The reconstructed thin film Q-tensor $\Qb = \qb-\frac{\beta}{2}\ProjSurf + \beta\normal\otimes\normal$ is uniaxial with eigenvalues $[\frac{2}{3}S,-\frac{1}{3}S,-\frac{1}{3}S]$. \autoref{fig:phaseDiagramBulk}  shows the phase diagram. Contrary to the modeling via degenerate states with $\beta=0$, see, \eg, \cite{Kralj2011}, the phase diagram of the bulk energy is preserved for $\beta = - \frac{1}{3} S^*$.

\begin{figure}[ht]
\begin{center}
\includegraphics[width=.45\linewidth]{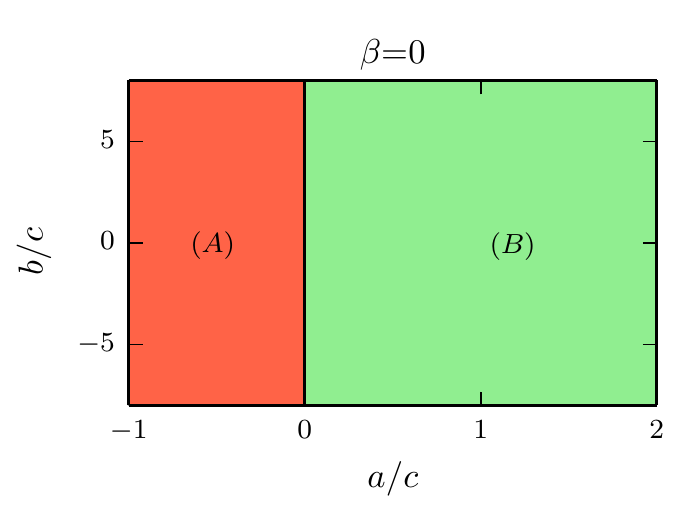}
\includegraphics[width=.45\linewidth]{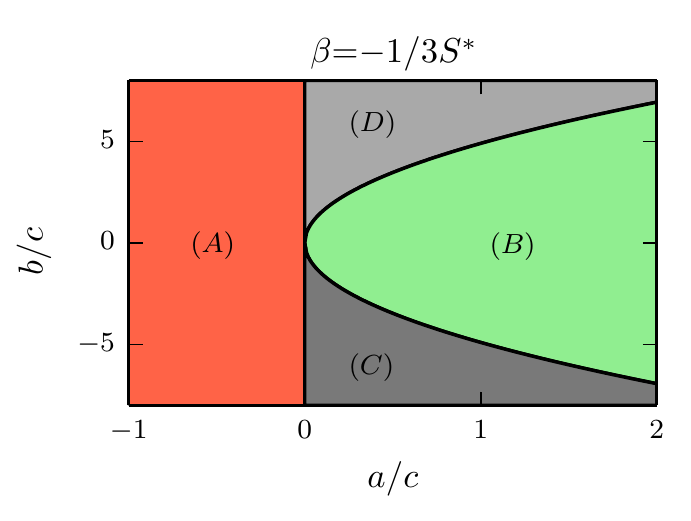}
\end{center}
\caption{\textbf{Phase diagram of bulk energy vs choice of $\beta\,$:} (colors online) (left) Double-well potential phase diagram for $\beta=0$ exhibiting two domains enabling the existence of (A): stable nematic ordering $S^*\neq 0$ or (B): stable isotropic ordering $S^*=0$. (right) Phase diagram for $\beta = -\frac{1}{3} S^*$ enabling additional stable phases discriminating between (C): only tangential nematic ordering is stable, $S^*>0$ or (D): only normal nematic ordering is stable, $S^*<0$. As we are interested only in tangential anchoring (D) is not within the scope of this paper.}
\label{fig:phaseDiagramBulk}
\end{figure}

With the emergence of defects the assumption $\beta = const$ becomes questionable and a more precise modeling would require to treat $\beta$ as a degree of freedom. However, this would lead to an excessive amount of additional coupling terms in the elastic energy and thus makes the complexity of the model infeasible. A detailed derivation and interpretation of the additional terms thus remains an open question.

\begin{figure}[ht]
\begin{center}
\includegraphics[width=.95\linewidth]{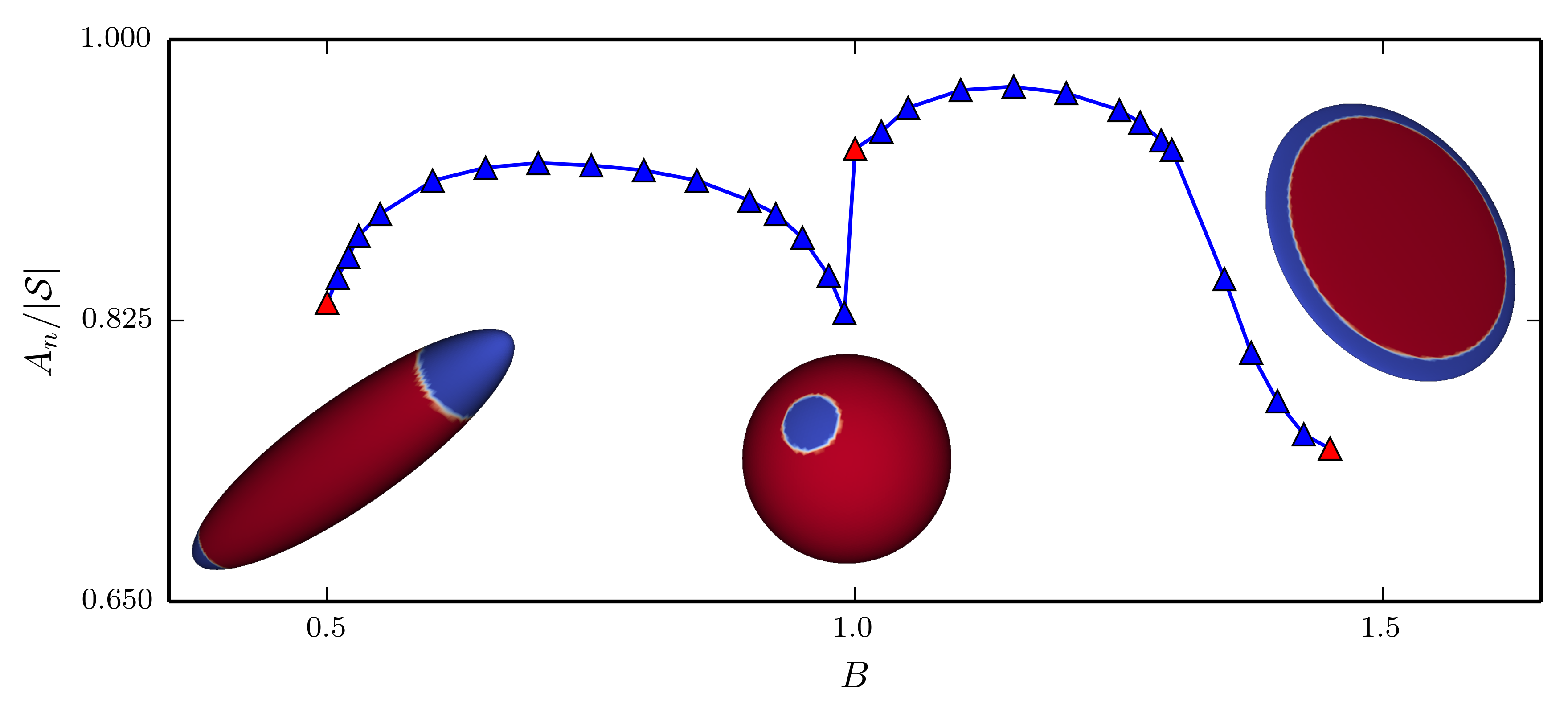}
\end{center}
\caption{\textbf{Curvature controls isotropic-nematic phase coexistence:} (colors online) Relative area of the nematic phase $A_n / |\surf|$ as a function of the geometry of the ellipsoid, parameterized by its axis $B$. For prolates ($B < 1.0$) the isotropic phases are located at the high curvature regions at the poles. They increase with increasing curvature for $B \lesssim 0.6$. For oblates ($B > 1.0$) the isotropic phase is located at the high curvature region along the rim. It increases with increasing curvature for $B \gtrsim 1.2$. The non-monotone behavior in between results from a rearrangement of two regions on a prolate to four regions on an oblate, which merge for larger $B$. The inlets show realizations with red corresponding to the nematic and blue to the isotropic phase. The corresponding shape parameters are highlighted with red triangle markers. To distinguish between both phases a threshold of $10\%$ of the expected norm of $\qb$ is used. The model parameters are the same as in \autoref{fig:defectDynamics}, except $\omega = 2.5$ to highlight the behavior already for moderate curvatures.}
\label{fig:shapeVariation}
\end{figure}

Considering the elastic energy, the surface model provides a set of new terms consisting of combinations of $\trace\qb^2$ and $\left\langle \shapeOperator,\qb \right\rangle $. These terms interact with the double-well potential $a'\trace\qb^2 + c \trace \qb^4$ of the surface bulk energy. By this interaction the bulk potential can be deformed locally, as e.g. $M_1 \trace\qb^2$, depends on geometric properties $M_1=M_1(\meanCurvature,\gaussianCurvature)$. So, while the bulk potential itself inhibits isotropic-nematic phase coexistence, a global phase coexistence can emerge on surfaces by local variance of geometric properties, see \autoref{fig:shapeVariation}.

\begin{figure}[ht]
\begin{center}
\includegraphics[width=.99\linewidth]{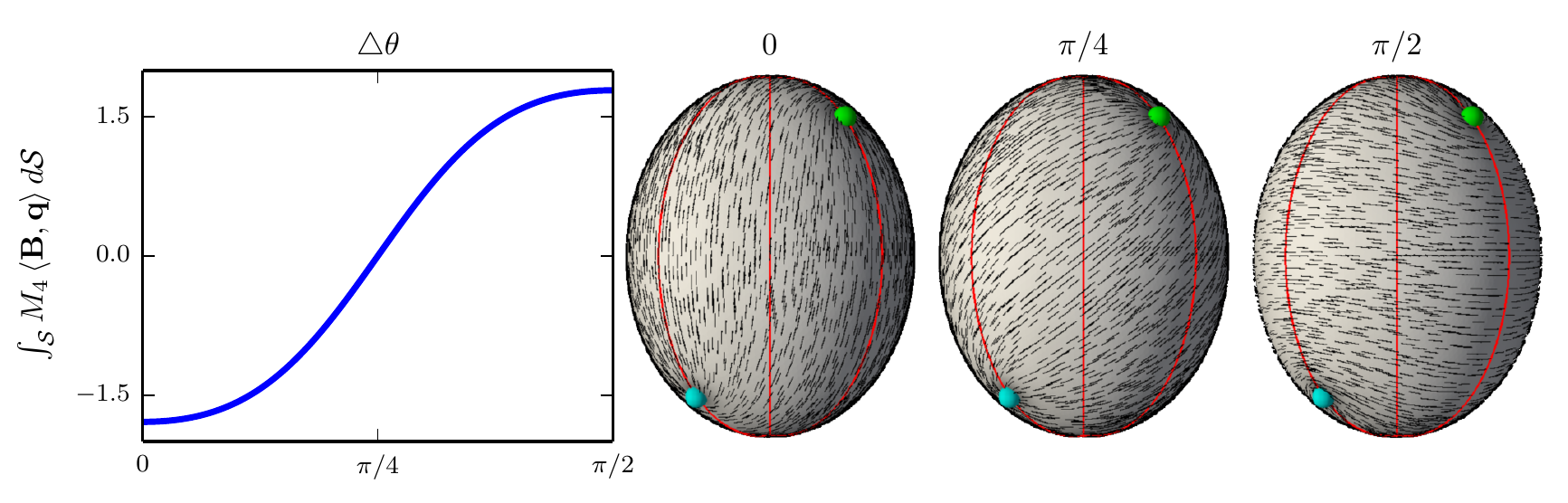}
\end{center}
\caption{\textbf{$\left\langle \shapeOperator,\qb \right\rangle$ term removes rotational invariance of elastic energy:} (colors online) (from left to right) Elastic energy contribution of $\int_{\surf}M_4\left\langle \shapeOperator,\qb \right\rangle \dS$ under rotation $\triangle\theta$ of Q-tensor field $\qb$ . Energetic minimum at $\triangle\theta=0$ with director parallel to lines of minimal curvature(marked in red), increased energy at intermediate state at $\triangle\theta = \pi/4$ and maximal energy for director orthogonal to lines of minimal curvature at $\triangle\theta = \pi/2$. Energy contributions of $L_1'$ and $M_1$ are invariant under rotation and therefore constant. The model parameters are the same as in \autoref{fig:defectDynamics}.}
\label{fig:BqTerm}
\end{figure}

The term  $\left\langle \shapeOperator,\qb \right\rangle $ imposes restrictions on energetic favorable ordering. This term can be expressed in terms of principal director $\Pb$ of $\qb$ by $\left\langle \shapeOperator,\qb \right\rangle =\Pb\shapeOperator\Pb -\frac{1}{2} \meanCurvature \|\Pb\|$ illustrating a geometric forcing towards the ordering along lines of minimal curvature. Such forcing does eliminate the rotational invariance of the four $+\frac{1}{2}$ defect configuration on an ellipsoid as demonstrated in \autoref{fig:BqTerm}.
The same effect has also been observed in surface Frank-Oseen modell for surface polar liquid crystals \cite{Nestler2017}.

Combining these effects provides a wide range of intriguing mechanisms coupling geometry and ordering with significant impacts on minimum energy states and dynamics. A more detailed elaboration of these interactions as well as a detailed description of the used numerical approach will be given elsewhere.

As a complementary result, we point out that the surface model for degenerate states in \cite{Kralj2011} can be reproduced by our model by choosing $\beta=0$, $k=L_{1}= \frac{1}{\sqrt{2}}L_{2}=-\frac{1}{\sqrt{2}}L_{3}, k_{24} = -\sqrt{2}k$ and defining $2a=A$, $2c=C$. A one-to-one comparison with the models derived in \cite{Napoli2012b,Golovaty2017,Canevari2017} is more complicated, as in contrast to our approach, which only uses the Levi-Civita connections $\nabla$, other surface derivatives are introduced in \cite{Napoli2012b,Golovaty2017,Canevari2017}, which make these models depending on the chosen coordinate system. A detailed comparison of numerical simulations might allow to point out similarities and differences.

\section{Conclusion}
We have rigorously derived a surface Q-tensor model by performing the thin film limit. Instead of making assumptions on the Q-tensor field in the thin film we have prescribed a set of boundary conditions for the thin film. By requiring the normal components of $\Qb$ to be compatible with the minimum of the bulk energy we were able to transfer main features of the thin film model, like uniaxiality or parameter-phase space, to the surface model. Nonetheless, these features break down in areas of defects. It still remains an open question how to treat defect areas properly in surface Q-tensor models.

The proposed approach to derive thin film limits is general and can also be used for other tensorial problems, e.g. in elasticity. Note that for deriving thin film limits containing higher order derivatives, also higher order expansions for thin film metric quantities are needed, \eg\ \( \GGamma_{ij}^{k} = \Gamma_{ij}^{k} + \xi\tensor{\Theta}{_{ij}^{k}} + \landau(\xi^{2}) \) with
\( \tensor{\Theta}{_{ij}^{k}} := \tensor{B}{_{i}^{k}_{|j}} + \tensor{B}{_{j}^{k}_{|i}} - \tensor{B}{_{ij}^{|k}} \) for the  pure tangential components of Christoffel symbols to express
second order covariant derivatives like the Laplace operator \( \Delta \) in the thin film. Our analysis also indicates that the surface evolution equation can be derived directly without a detour of a global energy minimization problem.
However, there is no general theory regarding sufficient prerequisites of this analysis, and we can not ensure, that, for example, every well posed tensorial thin film problem results in a well posed tensorial surface problem.

Even with the made approximations in the modeling approach the numerical results provide new insights on the tight coupling of topology, geometry, and energetic minimal states as well as dynamics. In a next step the derived coupling terms should be investigated systematically and the model should be validated versus experimental data.
Various extensions of the proposed model, like coupling to hydrodynamics and/or activity open up a wide array of possible physical applications in material science or biophysics. For recent work on hydrodynamics on surfaces we refer to \cite{Nitschke2012,Reuther2015,Nitschke2017,Reuther2017}. Also investigations on energy minimization and dynamics on moving domains seem now feasible. However, to deal with these problems numerically requires a more detailed investigation of the regularity. In contrast to our assumption for the tensor fields to be sufficiently smooth, which was made for simplicity, tensorial Sobolev spaces should be investigated, see e.g. \cite{Segatti2016}.

{\bf Acknowledgements}

HL and AV acknowledge financial support from DFG through Lo481/20 and Vo899/19, respectively. We further acknowledge computing resources provided by JSC under grant HDR06 and ZIH/TU Dresden.

\appendix
\section{Appendix}

  \begin{lem}\label{lem:derivationContractionsSurf}
    For all surface q-Tensors \( \qb\in\mathcal{Q}(\surf) \) holds
    \begin{align}
      \int_{\surf}\left\| \Div\qb \right\|^{2}\dS
              &= \int_{\surf} \frac{1}{2}\left\| \nabla\qb \right\|^{2} + \gaussianCurvature\trace\qb^{2}\dS \formComma\\
      \int_{\surf}\left\langle \nabla\qb,  \left( \nabla\qb \right)^{T_{(2\,3)}}\right\rangle\dS
              &= \int_{\surf} \frac{1}{2}\left\| \nabla\qb \right\|^{2} - \gaussianCurvature\trace\qb^{2}\dS \formPeriod
    \end{align}
  \end{lem}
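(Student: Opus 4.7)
The plan is to work in index notation with respect to the Levi-Civita connection on the closed surface $\surf$ and to split the argument into (i) an algebraic pointwise identity and (ii) an integration-by-parts step invoking the Ricci commutator and the two-dimensional form of the Riemann tensor.

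First I would establish the pointwise identity
\[
  \|\nabla\qb\|^{2} \;=\; \|\Div\qb\|^{2} \;+\; \langle \nabla\qb, (\nabla\qb)^{T_{(2\,3)}}\rangle
\]
on $\surf$. Because $\qb$ is symmetric ($q_{ij}=q_{ji}$) and traceless ($\tensor{q}{^{i}_{i}}=0$), in two dimensions $\qb$ has only two independent components and $\nabla\qb$ only four. Expanding both sides in a local orthonormal frame parametrized by $q_{11|1},q_{11|2},q_{12|1},q_{12|2}$ (with $q_{22|k}=-q_{11|k}$) verifies the identity by direct algebraic computation. Integrating it over $\surf$ already produces the sum of the two assertions of the lemma.

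Next I would derive the integrated relation
\[
  \int_{\surf}\langle\nabla\qb,(\nabla\qb)^{T_{(2\,3)}}\rangle\dS \;=\; \int_{\surf}\|\Div\qb\|^{2}\dS \;-\; 2\int_{\surf}\gaussianCurvature\trace\qb^{2}\dS\formPeriod
\]
Starting from $\int_{\surf} q^{ij|k}\,q_{ik|j}\dS$, I would integrate by parts on the closed surface (no boundary contribution) to transfer the outer $|j$ onto the first factor, obtaining $-\int_{\surf}(\nabla_{j}\nabla^{k}q^{ij})\,q_{ik}\dS$. Then I would swap the two covariant derivatives via the Ricci identity $[\nabla_{j},\nabla_{l}]q^{ij} = \tensor{R}{^{i}_{mjl}}q^{mj} + \tensor{R}{^{j}_{mjl}}q^{im}$. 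Using the two-dimensional form $R_{ijkl}=\gaussianCurvature(g_{ik}g_{jl}-g_{il}g_{jk})$ together with $\trace\qb=0$, each Ricci contraction reduces to $\gaussianCurvature\,\tensor{q}{^{i}_{l}}$, and the commutator collapses to the clean expression $\nabla_{j}\nabla^{k}q^{ij}=\nabla^{k}(\Div\qb)^{i}+2\gaussianCurvature\,q^{ik}$. A second integration by parts on the surviving term yields $\int_{\surf}\|\Div\qb\|^{2}\dS$, while the curvature contribution integrates against $q_{ik}$ to give $-2\int_{\surf}\gaussianCurvature\trace\qb^{2}\dS$, establishing the displayed relation.

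Combining the pointwise identity (integrated over $\surf$) with this integral relation leaves a $2\times 2$ linear system in the two integrals $\int\|\Div\qb\|^{2}\dS$ and $\int\langle\nabla\qb,(\nabla\qb)^{T_{(2\,3)}}\rangle\dS$, whose solution delivers both formulas of the lemma. The main obstacle is the Ricci-commutator step: the two contractions $\tensor{R}{^{i}_{mjl}}q^{mj}$ and $\tensor{R}{^{j}_{mjl}}q^{im}$ must both be reduced to multiples of $\tensor{q}{^{i}_{l}}$, which requires invoking $\trace\qb=0$ (to kill the $\delta^{i}_{l}\trace\qb$ remainder in the first) and the two-dimensional identity $\delta^{j}_{j}=2$ (which turns the Ricci-tensor trace into $\gaussianCurvature\,g_{ml}$ in the second); only then do they combine cleanly into the coefficient $2\gaussianCurvature$ that yields the claimed $K\trace\qb^{2}$ terms.
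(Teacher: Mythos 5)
Your proof is correct, and the structure is genuinely different from the paper's. The paper derives the first identity by introducing a 2-tensor curl, $\left[ \Rot\qb \right]_{i} = -E_{jk}\tensor{q}{_{i}^{jk}}$, observing that multiplication by $-\Eb$ is a Hodge-star isometry so that $\left\| \Rot\qb \right\| = \left\| \Div\qb \right\|$, and then integrating by parts in the averaged expression $\frac{1}{2}\left( \left\| \Div\qb \right\|^{2} + \left\| \Rot\qb \right\|^{2} \right)$ before invoking the two-dimensional Riemann-tensor commutator. You instead bypass the curl entirely by verifying the pointwise algebraic identity
\[
  \left\| \nabla\qb \right\|^{2} = \left\| \Div\qb \right\|^{2} + \left\langle \nabla\qb, \left( \nabla\qb \right)^{T_{(2\,3)}} \right\rangle
\]
directly in an orthonormal frame, which is where $\dim\surf = 2$, the symmetry $q_{ij}=q_{ji}$, and the trace-free constraint $q_{22}=-q_{11}$ are used; I checked the frame computation and the cross-terms do cancel, so this identity is valid. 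The second ingredient — the integrated relation $\int\left\langle \nabla\qb, \left( \nabla\qb \right)^{T_{(2\,3)}} \right\rangle\dS = \int\left\| \Div\qb \right\|^{2}\dS - 2\int\gaussianCurvature\trace\qb^{2}\dS$ obtained by two integrations by parts and the Ricci commutator with $\boldsymbol{R} = \gaussianCurvature\,\Eb\otimes\Eb$ — coincides with what the paper derives at the end of its proof; your sign and coefficient $2\gaussianCurvature$ are consistent with the paper's $-2\gaussianCurvature q_{il}$ once the raised-index (2,0) version is translated to the paper's (1,1) version. Your route is arguably more elementary since it avoids introducing the $\Rot$ operator and the isometry argument, at the cost of a brute-force component check for the pointwise identity; the paper's route is more conceptual and manifestly coordinate-free throughout. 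One small stylistic caveat: in writing $[\nabla_{j},\nabla_{l}]q^{ij}$ you use $j$ simultaneously as a derivative index and as a contracted tensor index, which is shorthand for first applying the Ricci identity for $[\nabla_{a},\nabla_{l}]q^{ij}$ and then contracting $a=j$; this is legitimate but would benefit from being spelled out.
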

  \begin{proof}
    With the surface Levi-Civita tensor \( \Eb\cong\dS \) defined by
    \begin{align}\label{eq:LeviCivitaTensorSurf}
      E_{ij} &:= \dS(\partial_{i}\xb, \partial_{j}\xb) = \sqrt{\det\gb}\,\varepsilon_{ij} \formComma
    \end{align}
    with Levi-Civita symbols \( \varepsilon_{ij} \), we use the 2-tensor curl
    \begin{align}
      \left[ \Rot\qb \right]_{i} &:= \left[ -\nabla\qb : \Eb \right]_{i} = -E_{jk}\tensor{q}{_{i}^{jk}}
    \end{align}
    and observe
    \begin{align}
      \left[ -\Eb\cdot\Rot\qb \right]_{i}
            &= E_{il}E_{jk}q^{lj|k}
             = \left( g_{ij}g_{lk} - g_{ik}g_{lj} \right)q^{lj|k}
             = \tensor{q}{^{l}_{i|l}} - \tensor{q}{_{j}^{j}_{|i}}
             = \tensor{q}{_{i}^{l}_{|l}}
             = \left[ \Div\qb \right]_{i} \formPeriod
    \end{align}
    Moreover, in this case, \( -\Eb\cdot \) is isomorph to the  Hodge-star operator \( * \) on differential 1-forms
    and therefore it can be seen as a length preserving pointwise counterclock quarter turn,
    that is why \( \left\| \Rot\qb \right\| = \left\| -\Eb\cdot\Rot\qb \right\| = \left\| \Div\qb \right\| \) holds for the norm.
    We remark, that \( \Eb \in \tangent^{(2)}\surf \) is compatible with \( \nabla \) and
    hence, we calculate
    \begin{align}
      \begin{aligned}
      \int_{\surf}\left\| \Div\qb \right\|^{2}\dS
          &= \frac{1}{2}\int_{\surf} \left\| \Div\qb \right\|^{2} + \left\| \Rot\qb \right\|^{2}\dS
          = -\frac{1}{2}\int_{\surf} \left( \tensor{q}{_{i}^{k}_{|k|l}} + E_{kj}E_{lm}\tensor{q}{_{i}^{k|j|m}} \right) q^{il}\dS \\
          &= -\frac{1}{2}\int_{\surf} \left( \tensor{q}{_{i}^{k}_{|k|l}} + \tensor{q}{_{il}^{|j}_{|j}} - \tensor{q}{_{i}^{k}_{|l|k}} \right) q^{il}\dS\formPeriod
          \end{aligned}
    \end{align}
    The Riemannian curvature tensor has only one independent component on surfaces and is given by
    \( \boldsymbol{R} = \gaussianCurvature\Eb\otimes\Eb \in \tangent^{(4)}\surf\).
    Hence, for changing the order of covariant derivatives, holds
    \begin{align}
      \begin{aligned}
      \tensor{q}{_{i}^{k}_{|k|l}} -\tensor{q}{_{i}^{k}_{|l|k}}
          &= \tensor{R}{^{j}_{ikl}}\tensor{q}{_{j}^{k}} - \tensor{R}{^{k}_{jkl}}\tensor{q}{_{i}^{j}}
          = \gaussianCurvature \left( \left( \tensor{\delta}{^{j}_{k}}g_{il} -  \tensor{\delta}{^{j}_{l}}g_{ik} \right)\tensor{q}{_{j}^{k}}
                                     - \left( \tensor{\delta}{^{k}_{k}}g_{jl} -  \tensor{\delta}{^{k}_{l}}g_{jk} \right)\tensor{q}{_{i}^{j}} \right) \\
          &= -2\gaussianCurvature q_{il} \formPeriod
          \end{aligned}
    \end{align}
    Finally, we get
    \begin{align}
        \int_{\surf}\left\| \Div\qb \right\|^{2}\dS
            &= -\frac{1}{2}\int_{\surf} \left( \tensor{q}{_{il}^{|j}_{|j}} -2 \gaussianCurvature q_{il} \right) q^{il}\dS
             = \int_{\surf} \frac{1}{2}\left\| \nabla\qb \right\|^{2} + \gaussianCurvature\trace\qb^{2}\dS \formComma\\
        \int_{\surf}\left\langle \nabla\qb,  \left( \nabla\qb \right)^{T_{(2\,3)}}\right\rangle\dS
            &= -\int_{\surf} \tensor{q}{_{i}^{k}_{|l|k}} q^{il}\dS
             = -\int_{\surf} \left( \tensor{q}{_{i}^{k}_{|k|l}} + 2\gaussianCurvature q_{il} \right) q^{il}\dS\notag\\
            &= \int_{\surf} \left\| \Div\qb \right\|^{2} -2\gaussianCurvature\trace\qb^{2}\dS
             = \int_{\surf} \frac{1}{2}\left\| \nabla\qb \right\|^{2} - \gaussianCurvature\trace\qb^{2}\dS \formPeriod
    \end{align}

  \end{proof}

  \begin{lem}\label{lem:quadtwotensortheorem}
    For all 2-tensors \( \tb\in\tangent^{(2)}\surf \) at surface \( \surf \) holds
    \begin{align}\label{eq:quadresult}
      \tb^{2} &=  \left( \trace\tb \right)\tb + \frac{1}{2}\left( \trace\tb^{2}-  \left( \trace\tb \right)^{2} \right)\gb \formPeriod
    \end{align}
  \end{lem}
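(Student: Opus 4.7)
The identity is the Cayley--Hamilton theorem applied to $2\times 2$ matrices, rewritten so that the determinant is replaced by a combination of traces. Since $\surf$ is two-dimensional, at each point the fibre of $\tangent^{(2)}\surf$ is isomorphic to $\R^{2\times 2}$, and \eqref{eq:quadresult} is a pointwise linear-algebra identity. My plan is therefore to reduce to this pointwise statement and invoke the standard $n=2$ Cayley--Hamilton relation.

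At a fixed point I would represent $\tb$ in the mixed form $\tensor{t}{^{i}_{j}}$, for which $\gb$ plays the role of the identity under the conventions of Section 3. Cayley--Hamilton then gives
\[
 \tb^{2} - (\trace\tb)\,\tb + (\det\tb)\,\gb = 0.
\]
Expanding in components yields $(\trace\tb)^{2} = \tensor{t}{^{i}_{i}}\tensor{t}{^{j}_{j}}$ and $\trace\tb^{2} = \tensor{t}{^{i}_{j}}\tensor{t}{^{j}_{i}}$, so Newton's identity in dimension two,
\[
 \det\tb = \tfrac{1}{2}\bigl((\trace\tb)^{2} - \trace\tb^{2}\bigr),
\]
follows directly from expanding the characteristic polynomial. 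Substituting this into the Cayley--Hamilton relation and rearranging yields \eqref{eq:quadresult}.

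If a more coordinate-free derivation is preferred, I would alternatively use the contracted-epsilon identity on a 2D Riemannian manifold, $\tensor{\delta}{^{i}_{k}}\tensor{\delta}{^{j}_{l}} - \tensor{\delta}{^{i}_{l}}\tensor{\delta}{^{j}_{k}} = \tensor{E}{^{ij}}E_{kl}$, with the Levi-Civita tensor $\Eb$ from \eqref{eq:LeviCivitaTensorSurf}; contracting with $\tensor{t}{^{k}_{m}}\tensor{t}{^{l}_{n}}$ and tracing the appropriate index pair reproduces the Cayley--Hamilton relation intrinsically, without choosing a frame.

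The only real obstacle is notational: making sure the one-index-contraction product $\tb^{2}$, the mixed-type representation of $\gb$ as identity, and the two scalar invariants $\trace\tb$ and $\trace\tb^{2}$ are all lined up consistently with the musical-isomorphism conventions of Section 3. Beyond this bookkeeping, the lemma is an elementary $2\times 2$ matrix identity valid pointwise on $\surf$, and no further estimates are required.
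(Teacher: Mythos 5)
Your proposal is correct, and your primary route is genuinely different from the paper's. The paper does not invoke Cayley--Hamilton at all: it works intrinsically with the surface Levi-Civita tensor $\Eb$, establishes the ``quarter turn'' identity $\Eb\tb\Eb = \tb^{T} - (\trace\tb)\gb$ from the contraction $E_{ik}E_{lj} = g_{il}g_{kj} - g_{ij}g_{kl}$, applies it both to $\tb^{2}$ and to $-(\Eb\tb\Eb)^{2}$ using $\Eb\Eb = -\gb$, averages the two resulting expressions, and undoes the quarter turn. Separately, the paper proves the determinant identity $(\trace\tb)^{2} - \trace\tb^{2} = 2\det\tb^{\sharp}$ as its own lemma (\autoref{lem:dettheorem}). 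Your approach instead reduces to the pointwise $2\times 2$ Cayley--Hamilton relation in mixed-index form and feeds in Newton's identity to eliminate $\det$; this is shorter and leans on classical linear algebra, whereas the paper's route is self-contained and builds up exactly the $\Eb$-machinery that it reuses elsewhere (notably in \autoref{lem:derivationContractionsSurf}, where $\Eb$ mediates the $\Div$/$\Rot$ identity, and in \autoref{lem:dettheorem}). Your alternative sketch via the contracted-epsilon identity is essentially what the paper does, since $E_{ik}E_{lj} = g_{il}g_{kj} - g_{ij}g_{kl}$ is that identity in lowered form. One small bookkeeping point you flagged and should carry through explicitly: the lemma's $\tb^{2}$ is the single-contraction product $[\tb^{2}]_{ij} = \tensor{t}{_{i}^{k}}t_{kj}$, so it is only after raising one index that $\tb^{2}$, $\gb$, $\trace$ line up with the matrix product, identity, and matrix trace that Cayley--Hamilton refers to; once that is stated, the reduction is airtight.
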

  \begin{proof}
    With the surface Levi-Civita tensor \( \Eb \) defined in \eqref{eq:LeviCivitaTensorSurf},
    the quarter turn in the row and column space of a 2-tensor \( \tb\in\tangent^{(2)}\surf \) is
    \begin{align}\label{eq:quadproof1}
      \left[ \Eb\tb\Eb \right]_{ij}
          &= E_{ik}E_{lj}t^{kl}
           = \left( g_{il}g_{kj} - g_{ij}g_{kl} \right)t^{kl}
           = t_{ji} - \tensor{t}{^{k}_{k}}g_{ij}
           = \left[ \tb^{T} - \left( \trace\tb \right)\gb \right]_{ij} \formPeriod
    \end{align}
    Particularly, \eqref{eq:quadproof1} is also valid for the square of \( \tb \), \ie,
    \begin{align}\label{eq:quadproof2}
      \Eb\tb^{2}\Eb = \left( \tb^{2} \right)^{T} - \left( \trace\tb^{2} \right)\gb\formPeriod
    \end{align}
    On the other hand side, with \( \Eb\Eb=-\gb \), \eqref{eq:quadproof1} and \( (\tb^{T})^{2} = (\tb^{2})^{T} \), we calculate
    \begin{align}
      \begin{aligned}\label{eq:quadproof3}
        \Eb\tb^{2}\Eb &= -\left( \Eb\tb\Eb \right)^{2}
                       = -\left( \tb^{T} - \left( \trace\tb \right)\gb \right)^{2}
                       = -\left(\tb^{T}\right)^{2} + 2\left( \trace\tb \right)\tb^{T} - \left( \trace\tb \right)^{2}\gb\\
                      &= -\left( \tb^{2} \right)^{T} + 2\left( \trace\tb \right)\Eb\tb\Eb + \left( \trace\tb \right)^{2}\gb\formPeriod
      \end{aligned}
    \end{align}
    Averaging identities \eqref{eq:quadproof2} and  \eqref{eq:quadproof3} results in
    \begin{align}\label{eq:quadproof4}
      \Eb\tb^{2}\Eb &= \left( \trace\tb \right)\Eb\tb\Eb + \frac{1}{2}\left(\left( \trace\tb \right)^{2} -  \trace\tb^{2} \right)\gb \formPeriod
    \end{align}
    Finally, we obtain  \eqref{eq:quadresult} by a quarter turn  with \( \Eb \)  in the row and column space of \eqref{eq:quadproof4}.
  \end{proof}

  \begin{lem}\label{lem:dettheorem}
    For all \textbf{full covariant} 2-tensors \( \tb\in\tangent^{0}_{2}\surf \) on surface \( \surf \) holds
    \begin{align}
      \left( \trace\tb \right)^{2} - \trace\tb^{2} = 2 \frac{\det\tb}{\det\gb}= 2\det\tb^{\sharp}\formComma
    \end{align}
    where \( \det \) means the determinant of the matrix proxy.
  \end{lem}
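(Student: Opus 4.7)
The statement is a $2$-dimensional Cayley--Hamilton identity in disguise, so my plan is to recognize it as such and avoid any direct component computation. Since $\surf$ is $2$-dimensional, the $(1,1)$-tensor proxy $\tb^{\sharp}$ with components $\tensor{t}{^{i}_{j}} = g^{ik}t_{kj}$ is an ordinary $2\times 2$ matrix, and the tensorial contractions appearing on the left-hand side collapse onto invariants of this matrix: one has $\trace\tb = \tensor{t}{^{i}_{i}} = \trace\tb^{\sharp}$, and, using the paper's convention $[\tb^{2}]_{ij} = \tensor{t}{_{i}^{k}}t_{kj}$, one likewise checks $\trace\tb^{2} = \trace(\tb^{\sharp})^{2}$. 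Hence the left-hand side depends only on the matrix proxy.

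Once this observation is in place, the identity can be obtained in either of two equivalent ways. The most direct is to apply Cayley--Hamilton to the $2\times 2$ matrix $\tb^{\sharp}$,
\begin{align*}
(\tb^{\sharp})^{2} - (\trace\tb^{\sharp})\tb^{\sharp} + (\det\tb^{\sharp})\kronecker = 0,
\end{align*}
and take the trace, yielding $\trace(\tb^{\sharp})^{2} - (\trace\tb^{\sharp})^{2} + 2\det\tb^{\sharp} = 0$, which is precisely the second equality of the claim. Alternatively, and more naturally within the paper's framework, I would read this off from \autoref{lem:quadtwotensortheorem}: the identity $\tb^{2} = (\trace\tb)\tb + \tfrac{1}{2}(\trace\tb^{2} - (\trace\tb)^{2})\gb$ already proved there is nothing but the lowered form of Cayley--Hamilton, and comparing its $\gb$-coefficient with the lowered Cayley--Hamilton relation $\tb^{2} - (\trace\tb)\tb + (\det\tb^{\sharp})\gb = 0$ forces $(\trace\tb)^{2} - \trace\tb^{2} = 2\det\tb^{\sharp}$.

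For the first equality $\det\tb^{\sharp} = \det\tb/\det\gb$, I would invoke standard multiplicativity of the determinant on matrix proxies: $\tb^{\sharp}$ is the matrix product $\gb^{-1}\tb$, so $\det\tb^{\sharp} = \det(\gb^{-1})\det\tb = \det\tb/\det\gb$. I do not anticipate any real obstacle here; the only point deserving care is that $\det$ applied to a fully covariant $2$-tensor is not coordinate-invariant by itself, which is exactly why the statement is phrased through either the ratio $\det\tb/\det\gb$ or the proxy determinant $\det\tb^{\sharp}$, both of which are genuine scalar invariants.
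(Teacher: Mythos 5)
Your proof is correct but takes a genuinely different route from the paper's. The paper works directly with Levi-Civita symbols: it writes $\det\tb = \tfrac{1}{2}\varepsilon_{ij}\varepsilon_{kl}t_{ik}t_{jl}$, converts $\varepsilon_{ij}$ to the Levi-Civita tensor via $E^{ij} = \varepsilon_{ij}/\sqrt{\det\gb}$, and contracts using $E^{ij}E^{kl} = g^{ik}g^{jl} - g^{il}g^{jk}$ to land on $\det\tb = \tfrac{\det\gb}{2}\bigl(\tensor{t}{_{i}^{i}}\tensor{t}{_{j}^{j}} - \tensor{t}{_{i}^{j}}\tensor{t}{_{j}^{i}}\bigr)$, with the final equality $\det\tb^{\sharp} = \det\tb/\det\gb$ coming from multiplicativity exactly as you argue. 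You instead recognize the claim as the traced Cayley--Hamilton identity for the $2\times 2$ matrix proxy $\tb^{\sharp}$, after first checking that $\trace\tb$ and $\trace\tb^{2}$ equal the corresponding matrix invariants $\trace\tb^{\sharp}$ and $\trace(\tb^{\sharp})^{2}$ — a point worth stating explicitly as you do, since the contractions use the metric. Your route is shorter and more structural, and it cleanly exposes \autoref{lem:quadtwotensortheorem} as the lowered Cayley--Hamilton relation, which the paper's Levi-Civita calculation obscures; the paper's route, on the other hand, is entirely self-contained within the $\Eb$-calculus it has already set up (reusing the same $E_{ik}E_{lj}$ contraction from \autoref{lem:quadtwotensortheorem}), and avoids appealing to Cayley--Hamilton as an external fact. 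Both are valid; your alternative derivation via comparing coefficients with \autoref{lem:quadtwotensortheorem} introduces no circularity, since that lemma's proof does not rely on this one.
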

  \begin{proof}
    We can interpret \( \tb \) as its matrix proxy with components \( t_{ij} \) due to the stipulation of the height of the indices.
    Hence, the determinant can be calculated applying the Levi-Civita symbols \( \varepsilon_{ij}\in\{-1,0,1\} \), \ie,
    \begin{align}\label{eq:detproof1}
      \det\tb &= \frac{1}{2}\sum_{i,j,k,l}\varepsilon_{ij}\varepsilon_{kl}t_{ik}t_{jl}\formPeriod
    \end{align}
    With the Levi-Civita tensor defined in \eqref{eq:LeviCivitaTensorSurf}
    we obtain the transformation property
    \begin{align}
      E^{ij} &= \frac{1}{\det\gb}E_{ij} = \frac{1}{\sqrt{\det\gb}}\varepsilon_{ij}\formPeriod
    \end{align}
    Therefore, \eqref{eq:detproof1} results in
    \begin{align}
      \det\tb &= \frac{\det\gb}{2}E^{ij}E^{kl}t_{ik}t_{jl}
               = \frac{\det\gb}{2}\left( g^{ik}g^{jl} - g^{il}g^{jk} \right)t_{ik}t_{jl}
               = \frac{\det\gb}{2}\left( \tensor{t}{_{i}^{i}}\tensor{t}{_{j}^{j}} - \tensor{t}{_{i}^{j}}\tensor{t}{_{j}^{i}} \right)\formPeriod
    \end{align}
    Additionally, we observe
    \begin{align}
      \det\tb^{\sharp} &= \det\left( \tb\cdot\gb^{-1} \right)= \frac{\det\tb}{\det\gb}\formPeriod
    \end{align}
  \end{proof}

  \begin{cor}\label{lem:QuadraticIdenentities}
     For shape operator \( \shapeOperator \) and Q-tensor \( \qb\in\mathcal{Q}(\surf) \) the following identities are valid.
     \begin{align}
       \left\| \shapeOperator \right\|^{2}
            &= \trace\shapeOperator^{2} = \meanCurvature^{2} - 2\gaussianCurvature\label{eq:quadcolresult1}\\
        \shapeOperator^{2}
            &= \meanCurvature\shapeOperator - \gaussianCurvature\gb\label{eq:quadcolresult2}\\
        \left\langle \shapeOperator^{2} , \qb \right\rangle
            &= \meanCurvature\left\langle \shapeOperator, \qb \right\rangle\label{eq:quadcolresult3}\\
        \qb^{2}
            &= \frac{1}{2}(\trace\qb^{2})\gb\label{eq:quadcolresult4}\\
        \left\| \shapeOperator\qb \right\|^{2}
            &= \frac{1}{2}(\trace\qb^{2})\left( \meanCurvature^{2} - 2\gaussianCurvature \right)\label{eq:quadcolresult5}\\
        \trace\left(  \shapeOperator\qb\right)^{2}
            &= \left\langle \shapeOperator, \qb \right\rangle^{2} + \gaussianCurvature\trace\qb^{2}\label{eq:quadcolresult6}
     \end{align}
  \end{cor}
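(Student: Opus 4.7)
The plan is to deduce each of the six identities from Lemma \ref{lem:quadtwotensortheorem} and Lemma \ref{lem:dettheorem}, which together encode the two-dimensional Cayley-Hamilton theorem for a 2-tensor and its invariant expression of the determinant. For identity \eqref{eq:quadcolresult1} I would apply Lemma \ref{lem:dettheorem} to $\shapeOperator$, using $\trace\shapeOperator=\meanCurvature$ and $\det\shapeOperator^{\sharp}=\gaussianCurvature$ from the definitions of the curvature invariants to read off $\meanCurvature^{2}-\trace\shapeOperator^{2}=2\gaussianCurvature$ directly. Substituting this into Lemma \ref{lem:quadtwotensortheorem} applied to $\tb=\shapeOperator$ immediately produces identity \eqref{eq:quadcolresult2}. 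Identity \eqref{eq:quadcolresult3} then follows by taking the Frobenius inner product of \eqref{eq:quadcolresult2} with $\qb$ and using $\langle\gb,\qb\rangle=\trace\qb=0$ for every Q-tensor, while identity \eqref{eq:quadcolresult4} is a direct application of Lemma \ref{lem:quadtwotensortheorem} to $\tb=\qb$ using the same vanishing trace.

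For identity \eqref{eq:quadcolresult5} I would exploit the symmetry of $\shapeOperator$ and $\qb$ in a short index computation to rewrite $\|\shapeOperator\qb\|^{2}=\trace(\shapeOperator\qb^{2}\shapeOperator)=\trace(\shapeOperator^{2}\qb^{2})$, and then invoke identity \eqref{eq:quadcolresult4} to collapse $\qb^{2}$ to a scalar multiple of $\gb$, which reduces the expression to $\frac{1}{2}(\trace\qb^{2})\trace\shapeOperator^{2}$ and hence, by \eqref{eq:quadcolresult1}, to the stated right-hand side $\frac{1}{2}(\trace\qb^{2})(\meanCurvature^{2}-2\gaussianCurvature)$.

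The main obstacle is identity \eqref{eq:quadcolresult6}, because $\shapeOperator\qb$ is not symmetric and the symmetric identities proved above cannot be quoted verbatim on it. The cleanest route I have in mind is to view $\shapeOperator$ and $\qb$ as endomorphisms of $\tangent\surf$ and apply the two-dimensional trace-determinant identity of Lemma \ref{lem:dettheorem} to the composition $\shapeOperator\qb$: its trace equals $\langle\shapeOperator,\qb\rangle$ by symmetry, its determinant factors multiplicatively as $\det(\shapeOperator\qb)^{\sharp}=\gaussianCurvature\det\qb^{\sharp}$, and a one-line application of Lemma \ref{lem:dettheorem} to $\qb$ together with $\trace\qb=0$ gives $\det\qb^{\sharp}=-\frac{1}{2}\trace\qb^{2}$, so that $(\trace\tb)^{2}-\trace\tb^{2}=2\det\tb^{\sharp}$ applied to $\tb=\shapeOperator\qb$ rearranges directly to \eqref{eq:quadcolresult6}. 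As a fallback within the symmetric framework one can instead expand $(\shapeOperator+\qb)^{2}$ through Lemma \ref{lem:quadtwotensortheorem}, cancel the $\shapeOperator^{2}$ and $\qb^{2}$ contributions using \eqref{eq:quadcolresult2} and \eqref{eq:quadcolresult4} to obtain the anticommutator $\shapeOperator\qb+\qb\shapeOperator=\meanCurvature\qb+\langle\shapeOperator,\qb\rangle\gb$, then multiply on the right by $\shapeOperator\qb$ and take the trace, invoking \eqref{eq:quadcolresult1} and \eqref{eq:quadcolresult4} to reach the same conclusion.
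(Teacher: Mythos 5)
Your proposal is correct and follows essentially the same route as the paper: identities \eqref{eq:quadcolresult1}--\eqref{eq:quadcolresult4} via \autoref{lem:dettheorem} and \autoref{lem:quadtwotensortheorem}, \eqref{eq:quadcolresult5} via the self-adjointness of $\shapeOperator$ and $\qb$ reducing $\|\shapeOperator\qb\|^2$ to $\langle\shapeOperator^2,\qb^2\rangle$, and \eqref{eq:quadcolresult6} via the trace--determinant identity applied to $\shapeOperator\qb$ with multiplicativity of the determinant. The alternative anticommutator route you sketch for \eqref{eq:quadcolresult6} is a valid variation but not needed.
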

  \begin{proof}
    The proofs here are very straightforward with all the spadework above.
    \eqref{eq:quadcolresult1} is a consequence of \autoref{lem:dettheorem} for \( \shapeOperator\in\tangent^{(2)}\surf \)
    and hence, we obtain also \eqref{eq:quadcolresult2} with \autoref{lem:quadtwotensortheorem}.
    Since  \( \qb\in\mathcal{Q}(\surf) \) is trace-free, we follow from \eqref{eq:quadcolresult2}, that
    \( \left\langle \shapeOperator^{2} , \qb \right\rangle = \meanCurvature\left\langle \shapeOperator, \qb \right\rangle - 2\gaussianCurvature\trace\qb \)
    and therefore \eqref{eq:quadcolresult3}.
    Again, \( \qb \) is a Q-tensor and thus \autoref{lem:quadtwotensortheorem} results in \eqref{eq:quadcolresult4}.
    The shape operator \( \shapeOperator \) is self-adjoint, so with \eqref{eq:quadcolresult4} we can calculate
    \begin{align}
      \left\| \shapeOperator\qb \right\|^{2} &= \left\langle \shapeOperator\qb, \shapeOperator\qb \right\rangle
                    = \left\langle \shapeOperator^{2},\qb^{2} \right\rangle
                    = \frac{1}{2}(\trace\qb^{2})\left\langle \shapeOperator^{2}, \gb \right\rangle
                    = \frac{1}{2}(\trace\qb^{2})\trace\shapeOperator^{2} \formPeriod
    \end{align}
    and get \eqref{eq:quadcolresult5} with \eqref{eq:quadcolresult1}.
    We note that \(\left\langle \shapeOperator, \qb \right\rangle^{2} = \left( \trace\shapeOperator\qb \right)^{2}\).
    Therefore,\autoref{lem:dettheorem} results in \eqref{eq:quadcolresult6}, because
    \begin{align}
      \left( \trace\left( \shapeOperator\qb \right) \right)^{2} - \trace\left(  \shapeOperator\qb\right)^{2}
          &= 2\det\left( \shapeOperator\qb \right)^{\sharp}
           = 2\left( \det\shapeOperator^{\sharp} \right)\left( \det\qb^{\sharp} \right)
           = \gaussianCurvature\left( \left( \trace\qb \right)^{2} - \trace\qb^{2} \right)
           = -\gaussianCurvature\trace\qb^{2} \formPeriod
    \end{align}
  \end{proof}

\begin{lem}
  For the inverse thin film metric \( \Gb^{-1} \) holds
  \begin{align}
    \begin{aligned}\label{eq:InverseMetric}
    G^{ij} &= \left( g^{ik} + \sum_{\frakl = 1}^{\infty} \xi^{\frakl}\left[ \shapeOperator^{\frakl} \right]^{ik} \right)
               \left( \delta_{k}^{j} + \sum_{\frakk = 1}^{\infty} \xi^{\frakk}\tensor{\left[ \shapeOperator^{\frakk} \right]}{_{k}^{j}} \right) \\
             &= \left[\left( \gb + \sum_{\frakk = 1}^{\infty} \xi^{\frakk} \shapeOperator^{\frakk} \right)^{2}\right]^{ij}\formComma\\
    G^{\xi\xi} &= 1 \formComma\\
    G^{i\xi} &= G^{\xi i} = 0 \formPeriod
    \end{aligned}
  \end{align}
\end{lem}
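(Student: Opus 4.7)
The plan is to exploit the explicit parametrization $\Xb(u,v,\xi)=\xb(u,v)+\xi\normal(u,v)$ together with the Weingarten relation $\partial_i\normal = -\tensor{B}{_i^k}\partial_k\xb$, factor the thin film metric as a ``squared'' Weingarten-type map acting on the surface metric, and then invert block-wise using a Neumann series.

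First, I would compute the coordinate frame of $\surfh$. The normal slot gives $\partial_\xi\Xb=\normal$ directly, while the tangential slots yield
\[
  \partial_i\Xb = \partial_i\xb+\xi\partial_i\normal = \tensor{N}{_i^k}\,\partial_k\xb,
  \qquad \tensor{N}{_i^k} := \delta_i^k - \xi\tensor{B}{_i^k}.
\]
Since $\partial_k\xb\perp\normal$ and $\normal\cdot\normal=1$, taking inner products immediately produces the stated block structure $G_{\xi\xi}=1$, $G_{i\xi}=G_{\xi i}=0$, together with the tangential piece $G_{ij} = \tensor{N}{_i^k}\tensor{N}{_j^l} g_{kl}$. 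The normal/trivial blocks then invert to $G^{\xi\xi}=1$ and $G^{i\xi}=G^{\xi i}=0$ with no further work.

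Second, I would invert the tangential block. Viewing $N$ as the $(1,1)$-endomorphism $\operatorname{Id}-\xi\shapeOperator^\sharp$ on each tangent space, for $|\xi|$ less than the reciprocal of the largest principal curvature the geometric (Neumann) series
\[
  \tensor{[N^{-1}]}{_k^j} = \delta_k^j + \sum_{\frakk=1}^{\infty}\xi^{\frakk}\tensor{\left[\shapeOperator^{\frakk}\right]}{_k^j}
\]
converges and inverts $N$. A direct check $G^{ik}G_{kj}=\delta^i_j$ using $G_{ij}=\tensor{N}{_i^k}\tensor{N}{_j^l}g_{kl}$ forces
\[
  G^{ij} = g^{kl}\tensor{[N^{-1}]}{_k^i}\tensor{[N^{-1}]}{_l^j},
\]
and raising the dummy index $k$ in the first $N^{-1}$ factor writes this precisely as the two-factor product in the lemma (after relabeling the summation index $\frakk\mapsto\frakl$ in the left factor).

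Third, I would reorganize the product into the stated squared form. Because $\shapeOperator$ is self-adjoint with respect to $\gb$, every power $\shapeOperator^{\frakk}$ is symmetric, which gives the commutation $g^{ik}\tensor{[\shapeOperator^{\frakk}]}{_k^j}=[\shapeOperator^{\frakk}]^{ij}=[\shapeOperator^{\frakk}]^{ji}$. Using this to raise the contracted index in the first factor lets me identify $g^{kl}\tensor{[N^{-1}]}{_k^i} = [N^{-1}]^{il}$, after which the two-factor product becomes the composition of $N^{-1}$ with itself (viewed as an endomorphism) with the leading index raised; this is exactly $[(\gb+\sum_{\frakk\ge1}\xi^\frakk\shapeOperator^\frakk)^2]^{ij}$, completing the second equality.

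The only step that requires any care is bookkeeping of index positions and ensuring the ``square'' inside the bracket in the second line is interpreted as endomorphism composition, which is the content of self-adjointness of $\shapeOperator$. Convergence of the geometric series is routine in a thin film neighborhood $|\xi|\ll 1$, and no genuine obstacle arises.
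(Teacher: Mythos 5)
Your argument is correct and follows essentially the same route as the paper: both exploit the block-diagonal structure of $\Gb$, factor the tangential block as $\Gb_t=(\gb-\xi\shapeOperator)^2$ (you derive this from the Weingarten relation, the paper cites the earlier expansion of $G_{ij}$), invert via the Neumann series for $(\kronecker-\xi\shapeOperator^\sharp)^{-1}$, and use self-adjointness of $\shapeOperator$ to rewrite the product as a square. The only (cosmetic) difference is that you rederive the factored form of $G_{ij}$ from $\partial_i\Xb=(\delta_i^k-\xi B_i{}^k)\partial_k\xb$, making the proof slightly more self-contained.
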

\begin{proof}
  First we define the pure tangential components of the thin film metric tensor as \( \Gb_{t} := \left\{ G_{ij} \right\} \).
  With \( \kronecker = \left\{ \delta^{i}_{j} \right\} \) the Kronecker delta, we can write down in usual matrix notation
  \begin{align}
    \Gb\cdot\Gb^{-1} &=
      \begin{bmatrix}
        \Gb_{t} & O \\
          O      & 1
      \end{bmatrix}
      \cdot
      \begin{bmatrix}
         \left\{ G^{ij} \right\} & \left\{ G^{i\xi} \right\} \\
         \left\{ G^{\xi i} \right\} & G^{\xi\xi}
      \end{bmatrix}
          =
             \begin{bmatrix}
                \kronecker & O \\
                O       & 1
             \end{bmatrix} \formPeriod
  \end{align}
  Thus, we obtain
  \begin{align}
    G^{\xi\xi} &= 1 \formComma\\
    G^{i\xi} &= G^{\xi i} = 0 \formComma\\
    \left\{ G^{ij} \right\} &= \Gb_{t}^{-1}
                         =\left( \gb - \xi\shapeOperator \right)^{-2}
                         = \left( \gb - \xi\shapeOperator \right)^{-1} \cdot \left( \kronecker - \xi\shapeOperator^{\sharp} \right)^{-1}\formPeriod
  \end{align}
  For \( h \) small enough, so that \( \xi\left\| \shapeOperator \right\| \le h\left\| \shapeOperator \right\| < 1 \) and exponent with a dot indicate matrix (endomorphism) power, we can use the Neumann series
  \begin{align}
    \left( \kronecker - \xi\shapeOperator^{\sharp} \right)^{-1} = \kronecker + \sum_{\frakk = 1}^{\infty}\xi^{\frakk}\left( \shapeOperator^{\sharp} \right)^{\cdot\frakk} \formComma
  \end{align}
  and therefore the assertion,
  because
  with \( \shapeOperator^{\frakk}  = \left( \shapeOperator\cdot\gb^{-1} \right)^{\cdot\frakk} \cdot \gb\) we get
  \begin{align}
    \left( \shapeOperator^{\sharp} \right)^{\cdot\frakk} &= \left( \shapeOperator\cdot\gb^{-1} \right)^{\cdot\frakk} = \shapeOperator^{\frakk}\cdot\gb^{-1} = \left( \shapeOperator^{\frakk} \right)^{\sharp}
  \end{align}
  and
  \begin{align}
    \left( \gb - \xi\shapeOperator \right)^{-1}
          &= \left( \left( \kronecker - \xi\shapeOperator^{\sharp} \right)\cdot \gb \right)^{-1}
          = \gb^{-1} \cdot \left( \kronecker - \xi\shapeOperator^{\sharp} \right)^{-1}
  \end{align}
\end{proof}

\begin{lem}
  For the determinant of the thin shell film tensor \( \det \Gb   \) holds
  \begin{align} \label{eq:DetMetricExpand}
     \det \Gb   &= \left( 1 - \xi\meanCurvature + \xi^{2}\gaussianCurvature \right)^{2}\det \gb  \formComma
  \end{align}
\end{lem}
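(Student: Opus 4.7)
The plan is to reduce the computation of $\det\Gb$ to a determinant computation on the $2\times 2$ tangential block, and then factor that block so that a standard characteristic-polynomial identity in two dimensions produces the factor $1 - \xi\meanCurvature + \xi^{2}\gaussianCurvature$.

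First I would use the block structure already established, namely $G_{\xi\xi} = 1$ and $G_{i\xi} = G_{\xi i} = 0$, so that
\begin{equation*}
  \det\Gb = \det\Gb_{t}, \qquad \Gb_{t} := \{G_{ij}\},
\end{equation*}
and insert the second-order tangential expansion from \eqref{eq:MetricExpand}, $G_{ij} = g_{ij} - 2\xi B_{ij} + \xi^{2}[\shapeOperator^{2}]_{ij}$. Using the symmetry of $\shapeOperator$, I expect the identity $[\shapeOperator^{2}]_{ij} = B_{ik}g^{kl}B_{lj} = \tensor{B}{^{k}_{i}}B_{kj}$ to allow the factorization
\begin{equation*}
  \Gb_{t} = (\kronecker - \xi\shapeOperator^{\sharp})^{T}\cdot\gb\cdot(\kronecker - \xi\shapeOperator^{\sharp}),
\end{equation*}
which one verifies by multiplying out and collecting powers of $\xi$, using $\tensor{B}{^{j}_{i}}g_{jk} = B_{ik}$. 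Taking determinants then yields $\det\Gb_{t} = \left(\det(\kronecker - \xi\shapeOperator^{\sharp})\right)^{2}\det\gb$.

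The final step is a two-dimensional Cayley-type identity: since $\shapeOperator^{\sharp}$ acts as a $2\times 2$ endomorphism of the tangent bundle, its characteristic polynomial in $\xi$ gives
\begin{equation*}
  \det(\kronecker - \xi\shapeOperator^{\sharp}) = 1 - \xi\trace\shapeOperator^{\sharp} + \xi^{2}\det\shapeOperator^{\sharp} = 1 - \xi\meanCurvature + \xi^{2}\gaussianCurvature,
\end{equation*}
where I use the definitions $\meanCurvature = \trace\shapeOperator$ and $\gaussianCurvature = \det\shapeOperator^{\sharp}$ stated in Section~3. Squaring and multiplying by $\det\gb$ gives \eqref{eq:DetMetricExpand}.

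There is no real obstacle here; the argument is essentially algebraic. The only subtle point is keeping track of index positions so that the factorization of $\Gb_{t}$ makes sense as a matrix product between the fully covariant form $\gb - \xi\shapeOperator$ and the mixed form $\kronecker - \xi\shapeOperator^{\sharp}$. Once that bookkeeping is correct, the two-dimensionality of the surface does the rest through the truncated characteristic polynomial.
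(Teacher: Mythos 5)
Your proof is correct and follows essentially the same route as the paper's: both reduce $\det\Gb$ to the $2\times 2$ tangential block via the block structure of $\Gb$, both exploit the exact factorization $\Gb_t = (\gb-\xi\shapeOperator)\gb^{-1}(\gb-\xi\shapeOperator)$ (the paper phrases this as defining a square root $\sqrt{\Gb_t^\sharp}:=(\gb-\xi\shapeOperator)^\sharp$), and both evaluate the resulting $2\times 2$ determinant as $1-\xi\meanCurvature+\xi^2\gaussianCurvature$. The only cosmetic differences are whether one raises an index first before taking determinants and whether the final step is quoted as the two-dimensional characteristic-polynomial identity or expanded out by hand.
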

\begin{proof}
  The mixed components are zero, so we get
  \begin{align}
    \det \Gb   &= G_{\xi\xi}\det \Gb_{t}   = \det \Gb_{t}  \formPeriod
  \end{align}
  Now, we define \( \sqrt{\Gb_{t}^{\sharp}}   := \left(\gb - \xi \shapeOperator\right)^{\sharp} \) as a square root of \( \Gb_{t}^{\sharp} \),
  because
  \begin{align}
    \Gb_{t}^{\sharp} &= \left( \left( \gb -\xi\shapeOperator \right)^{2} \right)^{\sharp}
            = \left( \left( \gb -\xi\shapeOperator \right)^{\sharp}\left( \gb -\xi\shapeOperator \right)  \right)^{\sharp}
            = \left( \gb -\xi\shapeOperator \right)^{\sharp}\left( \gb -\xi\shapeOperator \right)^{\sharp}
            = \left( \sqrt{\Gb_{t}^{\sharp}} \right)^{2} \formPeriod
  \end{align}
  Hence, we can calculate
  \begin{align}
    \det \Gb   &= \det \Gb_{t}   = \det \Gb_{t}^{\sharp}\gb
                  =\det \Gb_{t}^{\sharp}   \det \gb
                  =\det \sqrt{\Gb_{t}^{\sharp}}  ^{2} \det \gb   \formPeriod
  \end{align}
  For the determinant of \( \sqrt{\Gb_{t}^{\sharp}} \), we regard that \( \gb^{\sharp}=\kronecker \) is the Kronecker delta,
  so we obtain
  \begin{align}
     \det \sqrt{\Gb_{t}^{\sharp}}
          &= \det\left( \gb^{\sharp} - \xi\shapeOperator^{\sharp}  \right)
           = \left( 1 - \xi\tensor{B}{_{u}^{u}} \right)\left( 1 - \xi\tensor{B}{_{v}^{v}} \right)
                - \xi^{2}\tensor{B}{_{u}^{v}}\tensor{B}{_{v}^{u}} \\
          &= 1 - \xi\left( \tensor{B}{_{u}^{u}} + \tensor{B}{_{v}^{v}} \right)
                + \xi^{2}\left( \tensor{B}{_{u}^{u}} \tensor{B}{_{v}^{v}} - \tensor{B}{_{u}^{v}}\tensor{B}{_{v}^{u}} \right)
           = 1 - \xi\trace\shapeOperator + \xi^{2}\det\shapeOperator^{\sharp}\\
          &=1 - \xi\meanCurvature + \xi^{2}\gaussianCurvature \formPeriod
  \end{align}
\end{proof}

\begin{lem}\label{lem:BCAtSurfaceExpansion}
  Let \( \Wb \) be an arbitrary \( n \)-tensor in the thin film (with sufficient regularity), which vanish at the boundaries, \ie,
  \( \Wb\in\left\{ \Psib\in\tangent^{(n)}\surfh:\, \Psib=0 \text{ at }\partial\surfh \right\} \), holds
  \begin{align}
    \Wb|_{\surf} = \partial_{\xi}\Wb|_{\surf} = \landau(h^{2})\formPeriod
  \end{align}
\end{lem}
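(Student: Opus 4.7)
My plan is to Taylor expand $\Wb$ in the normal coordinate $\xi$ around the midsurface at $\xi = 0$, and then exploit the two Dirichlet conditions at $\xi = \pm h/2$ simultaneously. Under the parametrization $\Xb(u,v,\xi) = \xb(u,v) + \xi\normal(u,v)$, the boundary $\partial\surfh$ decomposes into two copies of $\surf$ lying at $\xi = h/2$ and $\xi = -h/2$, while $\surf$ itself sits at the midplane $\xi = 0$. This symmetry of the two boundary components about the midsurface is precisely what will upgrade a naive $\landau(h)$ estimate into the claimed $\landau(h^{2})$.

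Working componentwise in a local chart, each component $W_{I_1 \cdots I_n}(u,v,\xi)$ is smooth in $\xi$ by the assumed regularity, so I would write
\begin{align*}
W_{I_1\cdots I_n}(u,v,\xi) = W_{I_1\cdots I_n}|_{\surf} + \xi\,\partial_{\xi}W_{I_1\cdots I_n}|_{\surf} + \frac{\xi^{2}}{2}\partial_{\xi}^{2}W_{I_1\cdots I_n}|_{\surf} + \landau(\xi^{3}).
\end{align*}
Evaluating at $\xi = h/2$ and at $\xi = -h/2$ gives two identities whose left-hand sides both vanish by hypothesis. Adding them eliminates the odd powers and produces
\begin{align*}
0 = 2\,W_{I_1\cdots I_n}|_{\surf} + \frac{h^{2}}{4}\partial_{\xi}^{2}W_{I_1\cdots I_n}|_{\surf} + \landau(h^{3}),
\end{align*}
so $W_{I_1\cdots I_n}|_{\surf} = \landau(h^{2})$. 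Subtracting the same two identities eliminates the even powers and gives $h\,\partial_{\xi}W_{I_1\cdots I_n}|_{\surf} = \landau(h^{3})$, hence $\partial_{\xi}W_{I_1\cdots I_n}|_{\surf} = \landau(h^{2})$ as well. The componentwise bounds then transfer to $\Wb$ itself through any tensor norm.

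There is no real obstacle in this argument: nothing uses the tensorial structure beyond componentwise smoothness, so the conclusion holds for arbitrary rank $n$ and for any arrangement of co/contravariant slots. The only point that needs a brief justification is uniformity of the Taylor remainders in the tangential variables $(u,v)$, which follows immediately from the assumed sufficient smoothness of $\Wb$ on the compact slab $\surfh$. Conceptually, the whole lemma is the observation that a smooth function vanishing on two parallel level sets of $\xi$ at distance $h$ is order $h^{2}$ midway between them, together with its normal derivative.
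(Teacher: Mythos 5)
Your argument is exactly the paper's proof: Taylor expand about $\xi=0$, evaluate at $\xi=\pm h/2$ where $\Wb$ vanishes, and add/subtract the two identities to isolate $\Wb|_{\surf}$ and $\partial_{\xi}\Wb|_{\surf}$ at order $h^{2}$. The componentwise phrasing and the brief remark on uniformity in $(u,v)$ add nothing beyond what the paper's one-line expansion already conveys.
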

\begin{proof}
  We denote the boundary at \( \xi=h/2 \) by \( \Upsilon^{+} \) and \( \Upsilon^{-} \) at \( \xi=-h/2 \), \st\, \( \Upsilon^{+}\cup\Upsilon^{-} = \partial\surfh \).
  Taylor expansions at the surface result in
  \begin{align}
    0 &= \Wb|_{\Upsilon^{\pm}}
       = \Wb|_{\surf} \pm \frac{h}{2}\partial_{\xi} \Wb|_{\surf} + \frac{h^{2}}{8}\partial_{\xi}^{2} \Wb|_{\surf} + \landau(h^{3}) \formPeriod
  \end{align}
  And we yield
  \begin{align}
    0 &= \Wb|_{\Upsilon^{+}} + \Wb|_{\Upsilon^{-}} = 2\Wb|_{\surf} + \landau(h^{2})\\
    0 &= \Wb|_{\Upsilon^{+}} - \Wb|_{\Upsilon^{-}} = h\partial_{\xi}\Wb|_{\surf} + \landau(h^{3})\formPeriod
  \end{align}
\end{proof}

\enlargethispage{20pt}

\bibliographystyle{siam}
\bibliography{bibliography}

\end{document}